\begin{document}


\newtheorem{theorem}{Theorem}
\newtheorem{example}{Example}
\newtheorem{definition}{Definition}
\newtheorem{problem}{Problem}
\newtheorem{property}{Property}
\newtheorem{proposition}{Proposition}
\newtheorem{lemma}{Lemma}
\newtheorem{corollary}{Corollary}

\newcommand{\cond}{\textrm{pred}\xspace}
\newcommand{\dataset}{data set\xspace}
\newcommand{\datasets}{data sets\xspace}
\newcommand{\spview}{\textsf{SPView}\xspace}
\newcommand{\fjview}{\textsf{FJView}\xspace}
\newcommand{\aggview}{\textsf{AggView}\xspace}
\newcommand{\hashfunc}[1]{\textsf{hash}(#1)\xspace}
\newcommand{\hashop}{\textsf{hash}\xspace}
\newcommand{\nsc}{\textsf{NormalizedSC}\xspace}
\newcommand{\rsc}{\textsf{RawSC}\xspace}

\newcommand{\avgfunc}{\ensuremath{\texttt{avg} }\xspace}
\newcommand{\maxfunc}{\ensuremath{\texttt{max} }\xspace}
\newcommand{\minfunc}{\ensuremath{\texttt{min} }\xspace}
\newcommand{\histfunc}{\ensuremath{\texttt{histogram\_numeric} }\xspace}
\newcommand{\countfunc}{\ensuremath{\texttt{count}}\xspace}
\newcommand{\sumfunc}{\ensuremath{\texttt{sum} }\xspace}
\newcommand{\varfunc}{\ensuremath{\texttt{var} }\xspace}
\newcommand{\stdfunc}{\ensuremath{\texttt{std} }\xspace}
\newcommand{\covfunc}{\ensuremath{\texttt{cov} }\xspace}
\newcommand{\corrfunc}{\ensuremath{\texttt{corr} }\xspace}
\newcommand{\medfunc}{\ensuremath{\texttt{median} }\xspace}
\newcommand{\percfunc}{\ensuremath{\texttt{percentile} }\xspace}
\newcommand{\havingfunc}{\ensuremath{\texttt{HAVING} }\xspace}
\newcommand{\selectfunc}{\ensuremath{\texttt{select} }\xspace}
\newcommand{\ratio}{\ensuremath{\rho }\xspace}

\newcommand{\insertion}{\ensuremath{\texttt{INSERT} }\xspace}
\newcommand{\update}{\ensuremath{\texttt{UPDATE} }\xspace}
\newcommand{\delete}{\ensuremath{\texttt{DELETE} }\xspace}

\newcommand{\svcfull}{Stale View Cleaning\xspace}
\newcommand{\svc}{SVC\xspace}
\newcommand{\svcnospace}{SVC}

\newcommand{\tbl}[1]{\textsf{#1}\xspace}
\newcommand{\field}[1]{\textsf{#1}\xspace}
\newcommand{\cost}{\textrm{cost}\xspace}
\newcommand{\ans}{\textsf{ans}\xspace}
\newcommand{\dans}{\Delta\textsf{ans}\xspace}
\newcommand{\cqp}{correction query processing\xspace}
\newcommand{\Cqp}{Correction query processing\xspace}

\newcommand{\reminder}[1]{{{\textcolor{magenta}{\{\{\bf #1\}\}}}\xspace}}
\newcommand{\specialcell}[2][c]{%
  \begin{tabular}[#1]{@{}c@{}}#2\end{tabular}}

\def\ojoin{\setbox0=\hbox{$\bowtie$}%
  \rule[-.02ex]{.25em}{.4pt}\llap{\rule[\ht0]{.25em}{.4pt}}}
\def\leftouterjoin{\mathbin{\ojoin\mkern-5.8mu\bowtie}}
\def\rightouterjoin{\mathbin{\bowtie\mkern-5.8mu\ojoin}}
\def\fullouterjoin{\mathbin{\ojoin\mkern-5.8mu\bowtie\mkern-5.8mu\ojoin}}

\pagestyle{plain}

\title{Stale View Cleaning: Getting Fresh Answers from Stale Materialized Views}

\numberofauthors{1}
\author{\large Sanjay Krishnan, Jiannan Wang, Michael J. Franklin, Ken Goldberg, Tim Kraska{$\,^\dag$} \\
\vspace{.2em}\affaddr{\large UC Berkeley, ~~ $^\dag$Brown University} \\
\vspace{.1em}\affaddr{\large \{sanjaykrishnan, jnwang, franklin, goldberg\}@berkeley.edu}\\
\affaddr{\large tim\_kraska@brown.edu}
}

\fontsize{9pt}{10.5pt}
\selectfont

\maketitle

\vspace{-1em}

\begin{abstract}
Materialized views (MVs), stored pre-computed results, are widely used to facilitate fast queries on large datasets. When new records arrive at a high rate, it is infeasible to continuously update (maintain) MVs and a common solution is to defer maintenance by batching updates together. Between batches the MVs become increasingly stale with incorrect, missing, and superfluous rows leading to increasingly inaccurate query results.
We propose \svcfull (\svc) which addresses this problem from a data cleaning perspective. 
In \svc, we efficiently clean a sample of rows from a stale MV, and use the clean sample to estimate aggregate query results.
While approximate, the estimated query results reflect the most recent data.
As sampling can be sensitive to long-tailed distributions, we further explore an outlier indexing technique to give increased accuracy when the data distributions are skewed. \svc complements existing deferred maintenance approaches by giving accurate and bounded query answers between maintenance. We evaluate our method on a generated dataset from the TPC-D benchmark and a real video distribution application.  Experiments confirm our theoretical results: (1) cleaning an MV sample is more efficient than full view maintenance, (2) the estimated results are more accurate than using the stale MV, and (3) \svc is applicable for a wide variety of MVs.
\end{abstract}

\pagestyle{empty} 

\vspace{-0.5em}
\section{Introduction}
Storing pre-computed query results, also known as materialization, is an extensively studied approach to reduce query latency on large data \cite{LarsonY85, gupta1995maintenance, chirkova2011materialized}. 
Materialized Views (MVs) are now supported by all major commercial vendors.
However, as with any pre-computation or caching, the key challenge in using MVs is maintaining their freshness as base data changes.
While there has been substantial work in incremental maintenance of MVs \cite{chirkova2011materialized, DBLP:journals/vldb/KochAKNNLS14}, eager maintenance (i.e., immediately applying updates) is not always feasible.

In applications such as monitoring or visualization \cite{rainbird, DBLP:journals/cgf/LiuJH13}, analysts may create many MVs by slicing or aggregating over different dimensions.
Eager maintenance requires updating all affected MVs for every incoming transaction, and thus, each additional MV reduces the available transaction throughput.
This problem becomes significantly harder when the views are distributed and computational resources are contended by other tasks.
As a result, in production environments, it is common to batch updates together to amortize overheads \cite{chirkova2011materialized}.
Batch sizes are set according to system constraints, and can vary from a few seconds to even nightly.  

While increasing the batching period gives the user more flexibility to schedule around system constraints, a disadvantage is that MVs are stale between maintenance periods.
Other than an educated guess based on past data, the user has no way of knowing how incorrect their query results are.
Some types of views and query workloads can be sensitive to even a small number of base data updates, for example, if updates disproportionately affect a subset of frequently queried rows.
Thus, any amount of staleness is potentially dangerous, and this presents us a dichotomy between facing the cost of eager maintenance or coping with consequences of unknown inaccuracy.
In this paper, we explore an intriguing middle ground, namely, we can derive a bounded approximation of the correct answer for a fraction of the cost. 
With a small amount of up-to-date data, we can compensate for the error in aggregate query results induced by staleness.


\begin{figure}[t]
\centering
 \includegraphics[scale=0.29]{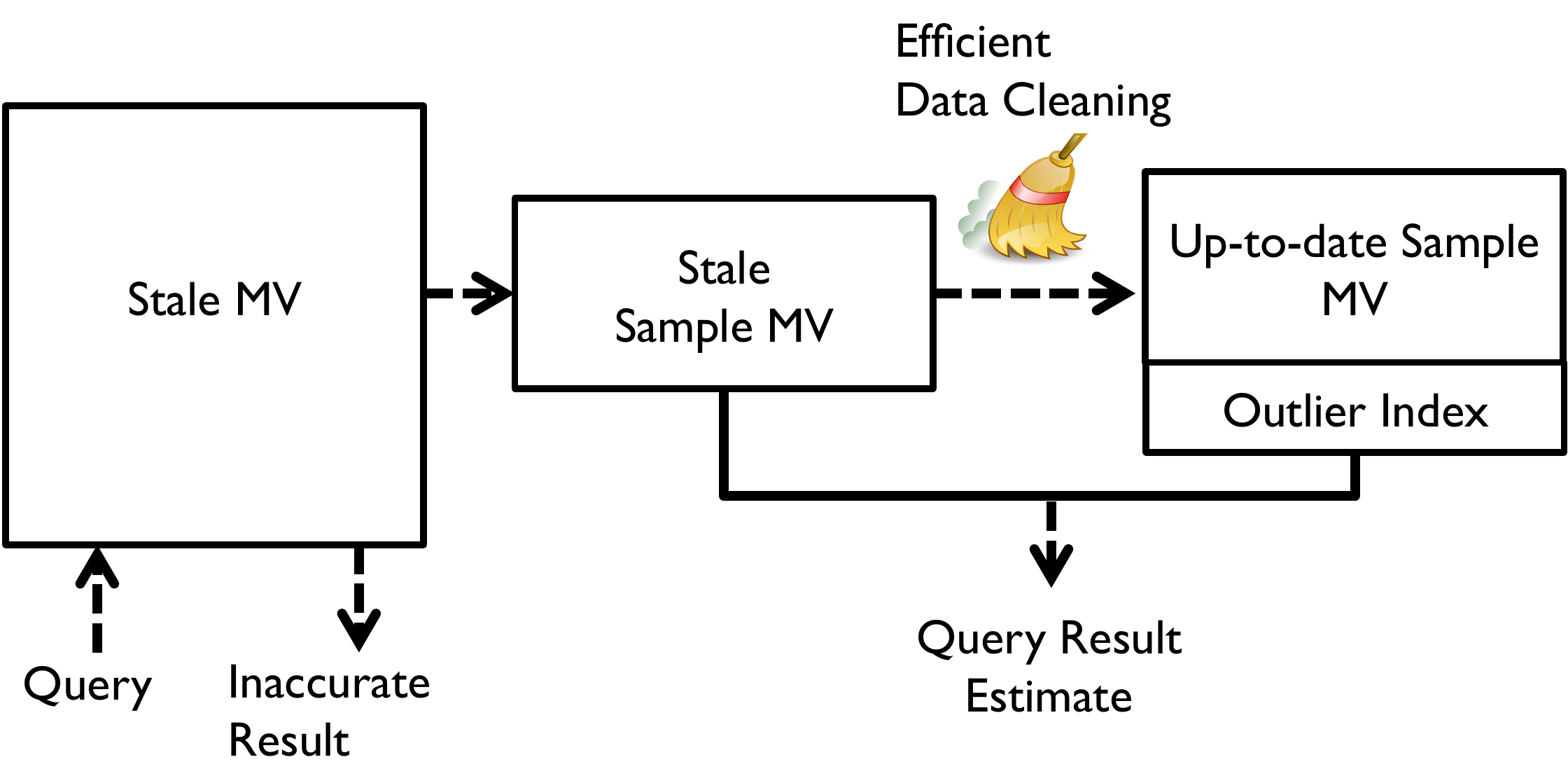} \vspace{-.25em}
 \caption{In \svc, we pose view maintenance as a sample-and-clean problem and show that we can use a sample of clean (up-to-date) rows from an MV to correct inaccurate query results on stale views.\label{sys-arch}}\vspace{-1.75em}
\end{figure}

Our method relies on modeling query answering on stale MVs as a data cleaning problem.
A stale MV has incorrect, missing, or superfluous rows, which are problems that have been studied in the data cleaning literature (e.g., see Rahm and Do for a survey\cite{rahm2000data}).
Increasing data volumes have led to development of new, efficient sampling-based approaches for coping with dirty data.   
In our prior work, we developed the SampleClean framework for scalable aggregate query processing on dirty data \cite{wang1999sample}.
Since data cleaning is often expensive, we proposed cleaning a sample of data and using this sample to improve the results of aggregate queries on the full dataset.
Since stale MVs are dirty data, an approach similar to SampleClean raises a new possibility of using a sample of ``clean'' rows in the MVs to return more accurate query results.

\vspace{0.5em}

\svcfull (\svc illustrated in Figure~\ref{sys-arch}) approximates aggregate query results from a stale MV and a small sample of up-to-date data.
We calculate a relational expression that materializes a uniform sample of up-to-date rows.
This expression can be interpreted as ``cleaning" a stale sample of rows.
We use the clean sample of rows to estimate a result for an aggregate query on the view.
The estimates from this procedure, while approximate, reflect the most recent data. 
Approximation error is more manageable than staleness because: (1) the uniformity of sampling allows us to apply theory from statistics such as the Central Limit Theorem to give tight bounds on approximate results, and (2) the approximate error is parametrized by the sample size which the user can control trading off accuracy for computation.
However, the MV setting presents new challenges that we did not consider in prior work.
To summarize our contributions:(1) a hashing-based technique that efficiently materializes an up-to-date sample view, (2) algorithms for processing general aggregate queries on a sample view and bounding results in confidence intervals, (3) an outlier indexing technique to reduce sensitivity to skewed datasets that can push the index up to derived relations, and (4) an evaluation of this technique on real and synthetic datasets to show that SVC gives highly accurate results for a relatively small maintenance cost.

The paper is organized as follows: 
In Section~\ref{sec-background}, we give the necessary background for our work.
Next, in Section~\ref{sec-arch}, we formalize the problem.
In Sections~\ref{sampling} and~\ref{correction}, we describe the sampling and query processing of our technique.
In Section~\ref{outlier}, we describe the outlier indexing framework.
Then, in Section~\ref{exp}, we evaluate our approach.
We discuss Related Work in Section~\ref{related}.
Finally, we present our Conclusions in Section~\ref{conclusion}.

\section{Background}\label{sec-background}

\subsection{Motivation and Example}\label{subsec-inc}
Materialized view maintenance can be very expensive resulting in staleness. 
Many important use-cases require creating a large number of views including: visualization, personalization, privacy, and real-time monitoring.  
The problem with eager maintenance is that every view created by an analyst places a bottleneck on incoming transactions.
There has been significant research on fast MV maintenance algorithms, most recently DBToaster \cite{DBLP:journals/vldb/KochAKNNLS14} which uses SQL query compilation and higher-order maintenance.
However, even with these optimizations, some materialized views are computationally difficult to incrementally maintain.
For example, incremental maintenance of views with correlated subqueries can grow with the size of the data.
It is also common to use the same infrastructure to maintain multiple MVs (along with other analytics tasks) adding further contention to computational resources and reducing overall available throughput. 
When faced with such challenges, one solution is to batch updates and amortize maintenance overheads.

\noindent \textbf{Log Analysis Example: } 
Suppose we are a video streaming company analyzing user engagement.
Our database consists of two tables \tbl{Log} and \tbl{Video}, with the following schema:
\begin{lstlisting}[mathescape,basicstyle={\scriptsize}]
Log(sessionId$\textrm{,}$ videoId)
Video(videoId$\textrm{,}$ ownerId$\textrm{,}$ duration)
\end{lstlisting}
The \tbl{Log} table stores each visit to a specific video with primary key (\texttt{sessionId}) and a foreign-key to the \tbl{Video} table (\texttt{videoId}).
For our analysis, we are interested in finding aggregate statistics on visits, such as the average visits per video and the total number of visits predicated on different subsets of owners. 
We could define the following MV that counts the visits for each \texttt{videoId} associated with owners and the duration.

\vspace{2em}

\begin{lstlisting}[mathescape,basicstyle={\scriptsize}]
CREATE VIEW visitView
AS SELECT videoId, ownerId, duration, 
count(1) as visitCount
FROM Log, Video WHERE Log.videoId = Video.videoId
GROUP BY videoId
\end{lstlisting}
As \tbl{Log} table grows, this MV becomes stale, and we denote the insertions to the table as:
\begin{lstlisting}[mathescape,basicstyle={\scriptsize}]
LogIns(sessionId$\textrm{,}$ videoId)
\end{lstlisting}

Staleness does not affect every query uniformly.
Even when the number of new entries in \texttt{LogIns} is small relative to \texttt{Log}, some queries might be very inaccurate.
For example, views to newly added videos may account for most of \texttt{LogIns}, so queries that count visits to the most recent videos will be more inaccurate.
The amount of inaccuracy is unknown to the user, who can only estimate an expected error based on prior experience.
This assumption may not hold in rapidly evolving data.
We see an opportunity for approximation through sampling which can give bounded query results for a reduced maintenance cost.
In other words, a small amount of up-to-date data allows the user to estimate the magnitude of query result error due to staleness.

\subsection{SampleClean~\cite{wang1999sample}}
SampleClean is a framework for scalable aggregate query processing on dirty data.
Traditionally, data cleaning has explored expensive, up-front cleaning of entire datasets for increased query accuracy.
Those who were unwilling to pay the full cleaning cost avoided data cleaning altogether.
We proposed SampleClean to add an additional trade-off to this design space by using sampling, i.e., bounded results for aggregate queries when only a sample of data is cleaned.
The problem of high computational costs for accurate results mirrors the challenge faced in the MV setting with the tradeoff between immediate maintenance (expensive and up-to-date) and deferred maintenance (inexpensive and stale). 
Thus, we explore how samples of ``clean" (up-to-date) data can be used for improved query processing on MVs without incurring the full cost of maintenance.

However, the metaphor of stale MVs as a Sample-and-Clean problem only goes so far and there are significant new challenges that we address in this paper.
In prior work, we modeled data cleaning as a row-by-row black-box transformation.
This model does not work for missing and superfluous rows in stale MVs.
In particular, our sampling method has to account for this issue and we propose a hashing based technique to efficiently materialize a uniform sample even in the presence of missing/superfluous rows.
Next, we greatly expand the query processing scope of SampleClean beyond \sumfunc, \countfunc, and \avgfunc queries.
Bounding estimates that are not \sumfunc, \countfunc, and \avgfunc queries, is significantly more complicated.
This requires new analytic tools such as a statistical bootstrap estimation to calculate confidence intervals.
Finally, we add an outlier indexing technique to improve estimates on skewed data.

\section{Framework Overview}\label{sec-arch}

\subsection{Notation and Definitions}\label{notation}
\svc returns a bounded approximation for aggregate queries on stale MVs for a flexible additional maintenance cost.

\noindent \textbf{Materialized View:} Let $\mathcal{D}$ be a database which is a collection of relations $\{R_i\}$. A \emph{materialized view} $S$ is the result of applying a \emph{view definition} to $\mathcal{D}$. 
View definitions are composed of standard relational algebra expressions: Select ($\sigma_{\phi}$), Project ($\Pi$), Join ($\bowtie$), Aggregation ($\gamma$), Union ($\cup$), Intersection ($\cap$) and Difference ($-$). 
We use the following parametrized notation for joins, aggregations and generalized projections:
\begin{itemize}[noitemsep] \sloppy
	\item $\Pi_{a_1,a_2,...,a_k}(R)$: Generalized projection selects attributes $\{a_1,a_2,...,a_k\}$ from $R$, allowing for adding new attributes that are arithmetic transformations of old ones (e.g., $a_1+a_2$).
	\item $\bowtie_{\phi (r1,r2)}(R_1,R_2)$: Join selects all tuples in $R_1 \times R_2$ that satisfy $\phi (r_1,r_2)$. We use $\bowtie$ to denote all types of joins even extended outer joins such as $\rightouterjoin,\leftouterjoin,\fullouterjoin$.
	\item $\gamma_{f,A}(R)$: Apply the aggregate function $f$ to the relation R grouped by the distinct values of $A$, where $A$ is a subset of the attributes.  
	The DISTINCT operation can be considered as a special case of the Aggregation operation. 
\end{itemize}
\vspace{-0.5em}
The composition of the unary and binary relational expressions can be represented as a tree, which is called the \emph{expression tree}.
The leaves of the tree are the \emph{base relations} for the view. 
Each non-leave node is the result of applying one of the above relational expressions to a relation.
To avoid ambiguity, we refer to tuples of the base relations as \emph{records} and tuples of derived relations as \emph{rows}.

\vspace{0.5em}

\noindent \textbf{Primary Key:} We assume that each of the base relations has a \emph{primary key}. If this is not the case, we can always add an extra column 
that assigns an increasing sequence of integers to each record.
For the defined relational expressions, every row in a materialized view can also be given a primary key \cite{DBLP:journals/vldb/CuiW03, DBLP:conf/sigmod/ZengGMZ14},
which we will describe in Section \ref{sampling}. 
This primary key is formally a subset of attributes $u \subseteq \{a_1,a_2,...,a_k\}$ such that all $s \in S(u)$ are unique.

\vspace{0.5em}

\noindent \textbf{Staleness:} For each relation $R_i$ there is a set of insertions $\Delta R_i$ (modeled as a relation)
and a set of deletions $\nabla R_i$.
An ``update'' to $R_i$ can be modeled as a deletion and then an insertion.
We refer to the set of insertion and deletion relations as ``delta relations", denoted by $\partial \mathcal{D}$:
\[
	\partial \mathcal{D} = \{\Delta R_1,...,\Delta R_k\} \cup \{\nabla R_1,...,\nabla R_k\}
\]
A view $S$ is considered \emph{stale} when there exist insertions or deletions to any of its base relations.
This means that at least one of the delta relations in $\partial \mathcal{D}$ is non-empty.

\vspace{0.5em}

\noindent \textbf{Maintenance:} There may be multiple ways (e.g., incremental maintenance or recomputation) to maintain a view $S$, and we denote the up-to-date view as $S'$.
We formalize the procedure to maintain the view as a \emph{maintenance strategy} $\mathcal{M}$.
A maintenance strategy is a relational expression the execution of which will return $S'$.
It is a function of the database $\mathcal{D}$, the stale view $S$, and all the insertion and deletion relations $\partial \mathcal{D}$. 
In this work, we consider maintenance strategies composed of the same relational expressions as materialized views described above.
\[
S' = \mathcal{M}(S,\mathcal{D}, \partial D)
\]

\vspace{0.5em}

\noindent \textbf{Staleness as Data Error:} The consequences of staleness are incorrect, missing, and superfluous rows. 
Formally, for a stale view $S$ with primary key $u$ and an up-to-date view $S'$:
\begin{itemize}[noitemsep] \sloppy
	\item \textbf{Incorrect: } Incorrect rows are the set of rows (identified by the primary key) that are updated in $S'$. For $s \in S$, let $s(u)$ be the value of the primary key. An incorrect row is one such that there exists a $s' \in S'$ with $s'(u) = s(u)$ and $s \ne s'$.
	\item \textbf{Missing: } Missing rows are the set of rows (identified by the primary key) that exist in the up-to-date view but not in the stale view. For $s' \in S'$, let $s'(u)$ be the value of the primary key. A missing row is one such that there does not exist a $s \in S$ with $s(u) = s'(u)$.
	\item \textbf{Superfluous: } Superfluous rows are the set of rows (identified by the primary key) that exist in the stale view but not in the up-to-date view. For $s \in S$, let $s(u)$ be the value of the primary key. A superfluous row is one such that there does not exist a $s' \in S'$ with $s(u) = s'(u)$.
\end{itemize}

\vspace{2.0em}

\noindent \textbf{Uniform Random Sampling:}
We define a sampling ratio $m\in [0,1]$ and for each row in a view $S$, we include it into a sample with probability $m$.
We use the ``hat'' notation (e.g., $\widehat{S}$) to denote sampled relations.
The relation $\widehat{S}$ is a \emph{uniform sample} of $S$ if
\[\text{(1) } \forall s \in \widehat{S} : s \in S\text{;~~~~~ (2) }Pr(s_1 \in \widehat{S}) =  Pr(s_2 \in \widehat{S}) = m.\]

\vspace{0.5em}

We say a sample is \emph{clean} if and only if it is a uniform random sample of the up-to-date view $S'$. 

\vspace{-0.5em}

\begin{example}\label{concepts}
In this example, we summarize all of the key concepts and terminology pertaining to materialized views, stale data error, and maintenance strategies.
Our example view, visitView, joins the Log table with the Video table and counts the visits for each video grouped by videoId.
Since there is a foreign key relationship between the relations, this is just a visit count for each unique video with additional attributes. 
The primary keys of the base relations are: sessionId for Log and videoId for Video.

If new records have been added to the Log table, the visitView is considered stale.
Incorrect rows in the view are videos for which the visitCount is incorrect and missing rows are videos that had not yet been viewed once at the time of materialization. 
While not possible in our running example, superfluous rows would be videos whose Log records have all been deleted.
Formally, in this example our database is $\mathcal{D}=\{Video, Log\}$, and the delta relations are $\partial\mathcal{D}=\{LogIns\}$. 

Suppose, we apply the change-table IVM algorithm proposed in~\cite{gupta1995maintenance}:
\vspace{-.55em}
\begin{enumerate}[noitemsep]
\item Create a ``delta view" by applying the view definition to LogIns. That is, calculate the visit count per video on the new logs:
\[
 \gamma(Video \bowtie LogIns)
\]
\item Take the full outer join of the ``delta view" with the stale view visitView (equality on videoId).
\[
 VisitView \fullouterjoin \gamma(Video \bowtie LogIns)
\]
\item Apply the generalized projection operator to add the visitCount in the delta view to each of the rows in visitView where we treat a NULL value as 0: 
\[
 \Pi (VisitView \fullouterjoin \gamma(Video \bowtie LogIns))
\]
Therefore, the maintenance strategy is:
\[
 \mathcal{M}(\{VisitView\},\{Video, Log\}, \{LogIns\})
\]
\[
\text{\hspace{0.7em}} = \Pi (VisitView \fullouterjoin \gamma(Video \bowtie LogIns))
\]
\end{enumerate}

\end{example}



\subsection{\svc Workflow}
Formally, the workflow of \svc is:\vspace{-0.5em}
\begin{enumerate}[noitemsep]
\item We are given a view $S$.
\item $\mathcal{M}$ defines the maintenance strategy that updates $S$ at each maintenance period.
\item The view $S$ is stale between periodic maintenance, and the up-to-date view should be $S'$.
\item \emph{(Problem 1. Stale Sample View Cleaning)} We find an expression $\mathcal{C}$ derived from $\mathcal{M}$ 
that cleans a uniform random sample of the stale view $\widehat{S}$ to produce a ``clean" sample of the up-to-date
view $\widehat{S'}$.
\item \emph{(Problem 2. Query Result Estimation)} Given an aggregate query $q$ and the state query result $q(S)$, we use $\widehat{S'}$ and $\widehat{S}$ to estimate the up-to-date result.
\item We optionally maintain an index of outliers $o$ for improved estimation in skewed data.
\end{enumerate} 

\noindent\textbf{Stale Sample View Cleaning: }
The first problem addressed in this paper is how to clean a sample of the stale materialized view.
\begin{problem}[Stale Sample View Cleaning]\sloppy
We are given a stale view $S$, a sample of this stale view $\widehat{S}$ with ratio $m$, the maintenance strategy $\mathcal{M}$, the base relations $\mathcal{D}$, and
the insertion and deletion relations $\partial \mathcal{D}$.
We want to find a relational expression $\mathcal{C}$ such that:
\[
\widehat{S}' = \mathcal{C}(\widehat{S},\mathcal{D},\partial \mathcal{D}),
\]
where $\widehat{S}'$ is a sample of the up-to-date view with ratio $m$. 
\end{problem}

\noindent\textbf{Query Result Estimation: }
The second problem addressed in this paper is query result estimation.
\begin{problem}[Query Result Estimation]
Let $q$ be an aggregate query of the following form \footnote{\scriptsize For simplicity, we exclude the group by clause for all queries in the paper, as it can be modeled as part of the \textsf{Condition}.}:
\begin{lstlisting} [mathescape,basicstyle={\scriptsize}]
SELECT $agg(a)$ FROM View WHERE Condition(A);
\end{lstlisting}
If the view $S$ is stale, then the result will be incorrect by some value~$c$:
\[
q(S') = q(S) + c
\]
Our objective is to find an estimator $f$ such that:
\[
q(S') \approx f(q(S),\widehat{S},\widehat{S}')
\] 
\end{problem}

\begin{example}\label{infexample}
Suppose a user wants to know how many videos have received more than 100 views.
\begin{lstlisting}[basicstyle={\scriptsize}]
SELECT COUNT(1) FROM visitView WHERE visitCount > 100;
\end{lstlisting}
Let us suppose the user runs the query and the result is $45$.
However, there have now been new records inserted into the Log table making this result stale. 
First, we take a sample of \tbl{visitView} and suppose this sample is a 5\% sample.
In Stale Sample View Cleaning (Problem 1), we apply updates, insertions, and deletions to the sample to efficiently materialize a 5\%  sample of the up-to-date view.
In Query Result Estimation (Problem 2), we estimate aggregate query results based on the stale sample and the up-to-date sample.
\end{example}

\section{Efficiently Cleaning a Sample} \label{sampling}
In this section, we describe how to find a relational expression $\mathcal{C}$ derived from the maintenance strategy $\mathcal{M}$ that
efficiently cleans a sample of a stale view $\widehat{S}$ to produce $\widehat{S}'$.

\subsection{Challenges}
To setup the problem, we first consider two naive solutions to this problem that will not work. 
We could trivially apply $\mathcal{M}$ to the entire stale view $S$ and update it to $S'$, and then sample.
While the result is correct according to our problem formulation, it does not save us on any computation for maintenance.
We want to avoid materialization of up-to-date rows outside of the sample. 
However, the naive alternative solution is also flawed. 
For example, we could just apply $\mathcal{M}$ to the stale sample $\widehat{S}$ and a sample of the delta relations $\widehat{\partial \mathcal{D}}$. 
The challenge is that $\mathcal{M}$ does not always commute with sampling. 

\subsection{Provenance}
\label{lin}
To understand the commutativity problem, consider maintaining a group by aggregate view:
\begin{lstlisting} [mathescape,basicstyle={\scriptsize}]
SELECT videoId, count(1) FROM Log
GROUP BY videoId
\end{lstlisting}
The resulting view has one row for every distinct \texttt{videoId}.
We want to materialize a sample of $S'$, that is a sample of distinct \texttt{videoId}.
If we sample the base relation \texttt{Log} first, we do not get a sample of the view.
Instead, we get a view where every count is partial.

To achieve a sample of $S'$, we need to ensure that for each $s \in S'$ all contributing rows in subexpressions to $s$ are also sampled. 
This is a problem of row provenance \cite{DBLP:journals/vldb/CuiW03}.
Provenance, also termed lineage, has been an important tool in the analysis of materialized views \cite{DBLP:journals/vldb/CuiW03} and in approximate query processing \cite{DBLP:conf/sigmod/ZengGMZ14}.
\begin{definition}[Provenance]\label{prov}
Let $r$ be a row in relation $R$, let $R$ be derived from some other
relation $R = exp(U)$ where $exp(\cdot)$ be a relational
expression composed of the expressions defined in Section \ref{notation}.
The provenance of row $r$ with respect to $U$ is $p_U(r)$. 
This is defined as the set of rows in $U$ such that for an update to any row $u \not\in p_U(r)$, it guarantees that $r$ is unchanged.
\end{definition}

\subsection{Primary Keys}
For the relational expressions defined in the previous sections, this provenance is well defined and can be tracked using primary key rules that are enforced on
each subexpression \cite{DBLP:journals/vldb/CuiW03}. 
We recursively define a set of primary keys for all relations in the expression tree:
\begin{definition} [Primary Key Generation]\label{pk}
For every relational expression $R$, we define the primary key attribute(s) of every expression to be:
\begin{itemize}[noitemsep]
\item Base Case: All relations (leaves) must have an attribute $p$ which is designated as a primary key. 
\item $\sigma_{\phi}(R)$: Primary key of the result is the primary key of R 
\item $\Pi_{(a_1,...,a_k)}(R)$: Primary key of the result is the primary key of R. The primary key must always be included in the projection.
\item $\bowtie_{\phi (r1,r2)}(R_1,R_2)$: Primary key of the result is the tuple of the primary keys of $R_1$ and $R_2$. 
\item $\gamma_{f,A}(R)$: The primary key of the result is the group by key $A$ (which may be a set of attributes).
\item $R_1 \cup R_2$: Primary key of the result is the union of the primary keys of $R_1$ and $R_2$
\item $R_1 \cap R_2$: Primary key of the result is the intersection of the primary keys of $R_1$ and $R_2$
\item $R_1 - R_2$: Primary key of the result is the primary key of $R_1$
\end{itemize}
For every node at the expression tree, these keys are guaranteed to uniquely identify a row.
\end{definition}
These rules define a constructive definition that can always be applied for our defined relational expressions. 


\begin{figure}[t]
\centering
 \includegraphics[scale=0.20]{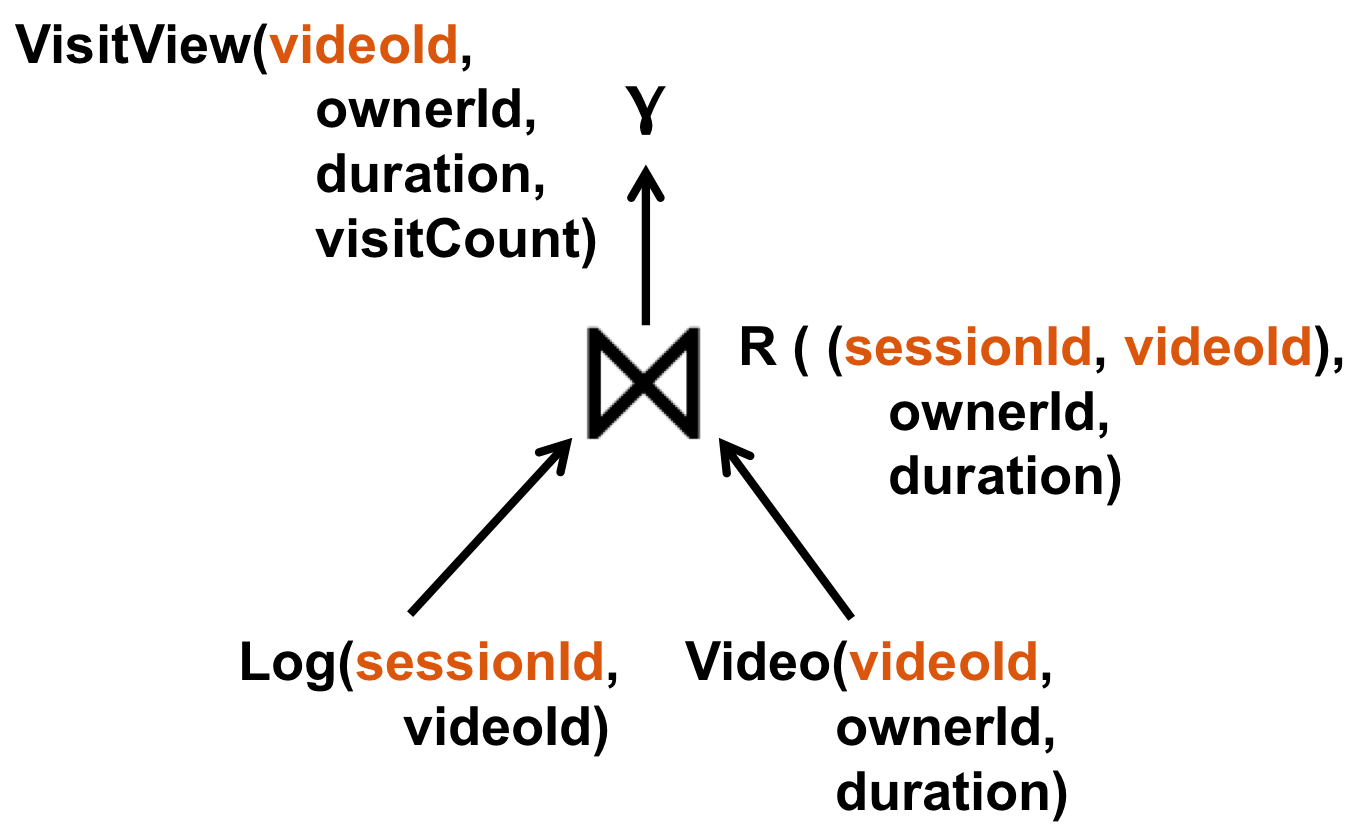} \vspace{-.5em}
 \caption{Applying the rules described in Definition \ref{pk}, we illustrate how to assign a primary key to a view.  \label{pk-fig}}\vspace{-1.5em}
\end{figure}

\begin{example}
A variant of our running example view that does not have a primary key is:
\begin{lstlisting}[mathescape,basicstyle={\scriptsize}]
CREATE VIEW visitView AS SELECT count(1) as visitCount
FROM Log, Video WHERE Log.videoId = Video.videoId
GROUP BY videoId
\end{lstlisting}
We illustrate the key generation process in Figure \ref{pk-fig}.
Suppose there is a base relation, such as \tbl{Log}, that is missing a primary key (sessionId)\footnote{\scriptsize It does not make sense for Video to be missing a primary key in our running example due to the foreign key relationship}.
We can add this attribute by generating an increasing sequence of integers for each record in \tbl{Log}. 
Since both base tables \tbl{Video} and \tbl{Log} have primary keys videoId and sessionId respectively,
the result of the join will have a primary key (videoId, sessionId).
Since the group by attribute is videoId, that becomes the primary key of the view.
\end{example}

\subsection{Hashing Operator}
\label{push}
The primary keys allow us to determine the set of rows that contribute to a row $r$ in a derived relation.
If we have a deterministic way of mapping a primary key to a Boolean, we can ensure that all contributing rows are also sampled. 
To achieve this we use a hashing procedure.
Let us denote the hashing operator $\eta_{a, m}(R)$. 
For all tuples in R, this operator applies a hash function whose range is $[0,1]$ to primary key $a$ (which may be a set) and selects those records with hash less than or equal to $m$ \footnote{\scriptsize For example, if hash function is a 32-bit unsigned integer which we can normalize by \texttt{MAXINT} to be in $[0,1]$.}.

In this work, we study uniform hashing where the condition $h(a) \le m$ implies that a fraction of approximately $m$ of the rows are sampled.
Such hash functions are utilized in other aspects of database research and practice (e.g. hash partitioning, hash joins, and hash tables).
Hash functions in these applications are designed to be as uniform as possible to avoid collisions.
Numerous empirical studies establish that many commonly applied hash functions (e.g., Linear, SDBM, MD5, SHA) have negligible differences with a true uniform random variable \cite{henke2009empirical, l2007testu01}.
Cryptographic hashes work particularly well and are supported by most commercial and open source systems, for example MySQL provides MD5 and SHA1.


To avoid materializing extra rows, we push down the hashing operator through the expression tree.
The further that we can push $\eta$ down, the more operators (i.e., above the sampling) can benefit.
This push-down is analogous to predicate push-down operations used in query optimizers. 
In particular, we are interested in finding an optimized relational expression that materializes an identical sample before and after the push-down.
We formalize the push-down rules below:
\begin{definition}[Hash push-down]
For a derived relation $R$, the following rules can be applied to push $\eta_{a, m}(R)$ down the expression tree. 
\begin{itemize}[noitemsep]
\item $\sigma_{\phi}(R)$: Push $\eta$ through the expression.  
\item $\Pi_{(a_1,...,a_k)}(R)$: Push $\eta $ through if $a$ is in the projection.
\item $\bowtie_{\phi (r1,r2)}(R_1,R_2)$: No push down in general. There are special cases below where push down is possible.
\item $\gamma_{f,A}(R)$: Push $\eta $ through if $a$ is in the group by clause $A$.
\item $R_1 \cup R_2$: Push $\eta $ through to both $R_1$ and $R_2$
\item $R_1 \cap R_2$: Push $\eta $ through to both $R_1$ and $R_2$
\item $R_1 - R_2$: Push $\eta $ through to both $R_1$ and $R_2$
\end{itemize}
\end{definition}

\noindent \textbf{Special Case of Joins: }
In general, a join $R \bowtie S$ blocks the push-down of the hash operator $\eta_{a, m}(R)$ since $a$ possibly consists of attributes in both $R$ and $S$.
However, when there is a constraint that enforces these attributes are equal then push-down is possible.

\emph{Foreign Key Join. } If we have a join with two foreign-key relations $R_1$ (fact table with foreign key $a$) and $R_2$ (dimension table with primary key $b \subseteq a$) and we are sampling the key $a$, then we can push the sampling down to $R_1$. This is because we are guaranteed that for every $r_1\in R_1$ there is only one $r_2 \in R_2$. 

\emph{Equality Join. } If the join is an equality join and $a$ is one of the attributes in the equality join condition $R_1.a = R_2.b$, then $\eta$ can be pushed down to both $R_1$ and $R_2$. On $R_1$ the pushed down operator is $\eta_{a, m}(R_1)$ and on $R_2$ the operator is $\eta_{b, m}(R_2)$. 

\begin{example}
We illustrate our hashing procedure in terms of SQL expressions on our running example.
We can push down the hash function for the following expressions:
\begin{lstlisting}[mathescape]
SELECT * FROM Video WHERE Condition($\cdot$)
SELECT * FROM Video,Log WHERE Video.videoId = Log.videoId
SELECT videoId, count(1) FROM Log GROUP BY videoId
\end{lstlisting}
\vspace{0.3em}
The following expressions are examples where we cannot push-down the hash function:
\begin{lstlisting}
SELECT * FROM Video, Log

SELECT c, count(1)
FROM ( 
       SELECT videoId, count(1) as c FROM Log 
       GROUP BY videoId
     )
GROUP BY c
\end{lstlisting}

\end{example}

In Theorem \ref{push-down-corr}, we prove the correctness of our push-down rules.

\begin{theorem}\label{push-down-corr}
Given a derived relation $R$, primary key $a$, and the sample $\eta_{a, m}(R)$.
Let $S$ be the sample created by applying $\eta_{a, m}$ without push-down and 
$S'$ be the sample created by applying the push-down rules to $\eta_{a, m}(R)$.
$S$ and $S'$ are identical samples with sampling ratio $m$.
\end{theorem}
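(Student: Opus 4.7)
The plan is to argue by structural induction on the expression tree that built $R$, leveraging two facts already established in the paper: (i) by Definition \ref{pk}, the primary key $a$ of any derived relation is constructed deterministically from primary keys of its subexpressions, and (ii) the hash function $h$ is deterministic on its input, so for any relation $T$ with primary key $a$, a row $t \in T$ satisfies $t \in \eta_{a,m}(T)$ iff $h(t(a)) \le m$. These two facts mean that sample membership is a purely key-based predicate, which is the feature I would exploit throughout the induction.

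\textbf{Base case.} If $R$ is a base relation (a leaf), there is nothing to push, so $S = S' = \eta_{a,m}(R)$ and the samples are identical with ratio $m$ in expectation by the uniformity of $h$.

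\textbf{Inductive step.} Assume the theorem holds for every proper subexpression. I would then handle each operator in the push-down rules, and in each case verify the identity $\eta_{a,m}(\mathrm{op}(\cdots)) = \mathrm{op}(\eta_{\cdot,m}(\cdots))$ as sets. For $\sigma_\phi(R_1)$, the key $a$ of the output equals the key of $R_1$ and $\phi$ does not alter $a$, so $h(r(a)) \le m$ is decided identically before and after filtering. For $\Pi_{a_1,\ldots,a_k}(R_1)$ with $a$ among the projected attributes, the hash value of each output row equals that of its source row in $R_1$, so pushing through commutes. For $\gamma_{f,A}(R_1)$ with $a \subseteq A$, every group's key $A$ is a function of the key values shared by \emph{all} contributing rows in $R_1$; hence the pushed-down $\eta_{a,m}(R_1)$ keeps either all or none of the rows of each group, so the surviving groups are exactly those with $h(A) \le m$. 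The set operators $\cup, \cap, -$ act element-wise on primary keys, so pushing $\eta$ to both sides is immediate. Finally, for the special join cases: in the foreign-key case the dimension row is uniquely determined by the fact-table row's foreign key $a$, so sampling the fact table on $a$ produces exactly the joined rows whose key $a$ hashes below $m$; in the equijoin case $R_1.a = R_2.b$, any surviving joined row has $h(a) = h(b)$, so independently hashing $R_1$ on $a$ and $R_2$ on $b$ keeps exactly the matching pairs that the post-join hash would keep.

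\textbf{Combining.} Applying the inductive hypothesis to the rewritten subexpressions then yields $S' = S$ as sets, and since each row is retained iff $h(a) \le m$, the sampling ratio remains $m$. The main obstacle I anticipate is the aggregation case: one must be careful to argue that the primary key $A$ of the grouped relation truly equals the hash input on \emph{every} row of the pre-aggregated relation belonging to that group, which relies explicitly on the rule in Definition \ref{pk} that $a \subseteq A$. A secondary subtlety is the equijoin case, where the push-down places two hash operators, one on each side, and one must show that the conjunction of $h(a) \le m$ on the left and $h(b) \le m$ on the right collapses to the single condition $h(a) = h(b) \le m$ on surviving tuples, which follows from the join predicate $a = b$ and the determinism of $h$. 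All other cases are straightforward rewriting.
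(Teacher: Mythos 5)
Your proposal is correct and follows essentially the same route as the paper's proof: structural induction on the expression tree, with a case analysis over the operators showing that hashing on the (deterministically propagated) primary key commutes with each one, including the same treatment of the foreign-key and equijoin special cases and the observation that all rows of a group share the group-by key. Your write-up is in fact somewhat more detailed than the paper's sketch, but it introduces no new idea or decomposition.
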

\begin{proof}[Sketch]
We can prove this by induction.
The base case is where the expression tree is only one node, trivially making this true.
Then, we can induct considering one level of operators in the tree.
$\sigma, \cup, \cap, -$ clearly commutes with hashing $a$.
$\Pi$ commutes only if $a$ is in the projection.
For $\bowtie$, a sampling operator on $Q$ can be pushed down if $a$ is in either $k_r$ or $k_s$, or if there is a constraint that links $k_r$ to $k_s$.
For group by aggregates, if $a$ is in the group clause (i.e., it is in the aggregate), then hashing the operand filters all rows that have $a$ which is sufficient to materialize the derived row.
\end{proof}

\subsection{Efficient View Cleaning}
If we apply the hashing operator to $\mathcal{M}$, we can get an optimized cleaning expression $\mathcal{C}$ that avoids materializing unnecessary rows. 
When applied to a stale sample of a view $\widehat{S}$, the database $\mathcal{D}$, and the delta relations $\partial \mathcal{D}$, it produces an up-to-date sample with sampling ratio $m$:
\[
\widehat{S}' = \mathcal{C}(\widehat{S},\mathcal{D},\partial \mathcal{D})
\]
Thus, it addresses Problem 1 from the previous section.

\begin{figure}[t]
\centering
 \includegraphics[scale=0.22]{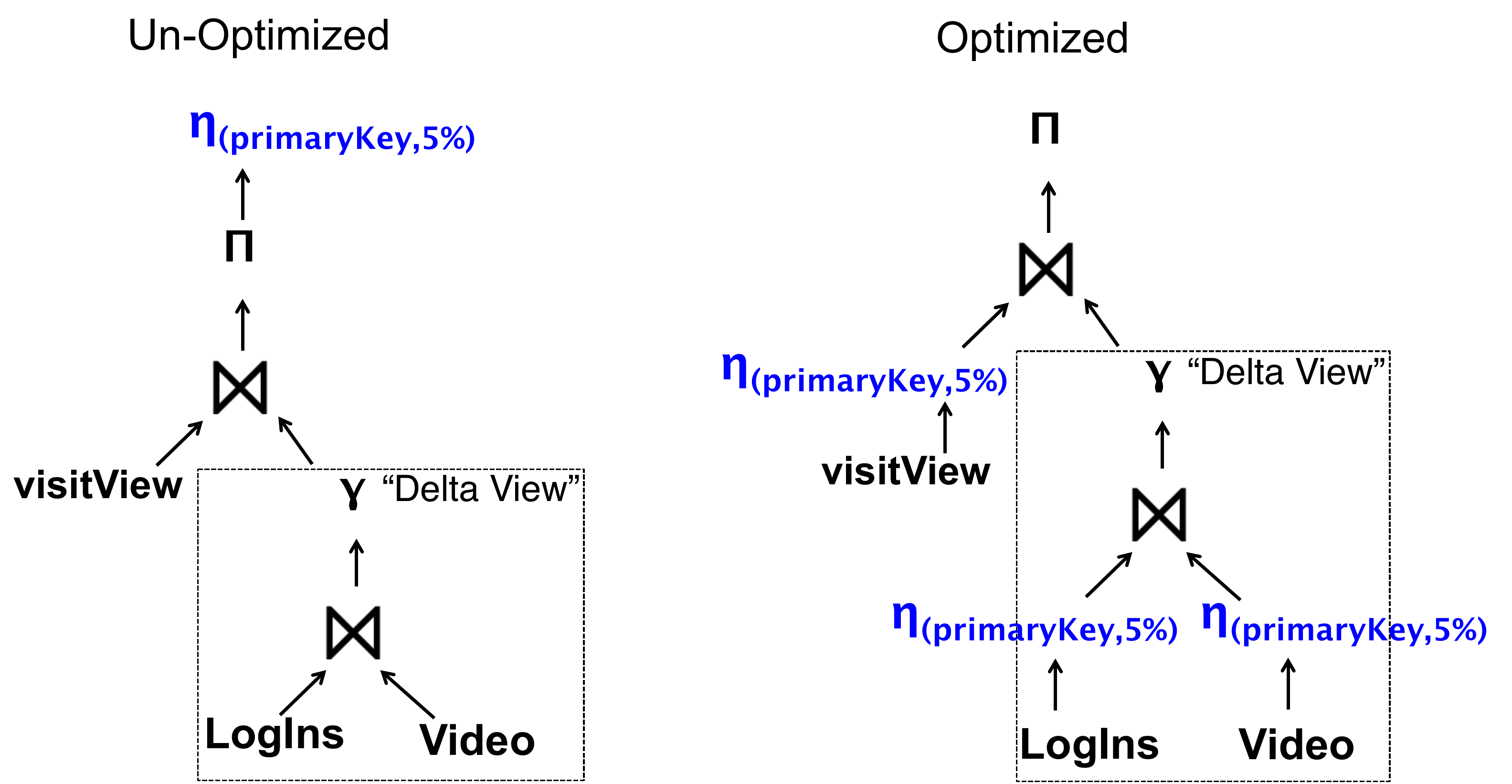} \vspace{-.5em}
 \caption{Applying the rules described in Section \ref{push}, we illustrate how to optimize the sampling of our example maintenance strategy. \label{exexpr2}}\vspace{-1em}
\end{figure}

\begin{example}
We illustrate our proposed approach on our example view \texttt{visitView} with the expression tree listed in Figure \ref{exexpr2}. 
We start by applying the hashing operator to the primary key (\texttt{videoId}).
The next operator we see in the expression tree is a projection that increments the \texttt{visitCount} in the view, and this allows
for push-down since primary key is in the projection.
The second expression is a hash of the equality join key which merges the aggregate from the ``delta view'' to the old view allowing us to push down on both branches of the tree using our special case for equality joins.
On the left side, we reach the stale view so we stop.
On the right side, we reach the aggregate query (count) and since the primary key is in group by clause, we can push down the sampling.
Then, we reach another point where we hash the equality join key allowing us to push down the sampling to the relations \tbl{LogIns} and \tbl{Video}.
\end{example}

\subsection{Corresponding Samples}
We started with a uniform random sample $\widehat{S}$ of the stale view $S$.
The hash push down allows us to efficiently materialize the sample $\widehat{S}'$.
$\widehat{S}'$ is a uniform random sample of the up-to-date view S.
While both of these samples are uniform random samples of their respective relations, 
the two samples are correlated since $\widehat{S}'$ is generated by cleaning $\widehat{S}$.
In particular, our hashing technique ensures that the primary keys in $\widehat{S}'$ depend on the primary keys in $\widehat{S}$.
Statistically, this positively correlates the query result $q(\widehat{S}')$ and $q(\widehat{S})$. 
We will see how this property can be leveraged to improve query estimation accuracy (Section \ref{re}). 
\vspace{-0.5em}
\begin{property}[Correspondence]
Suppose $\widehat{S'}$ and $\widehat{S}$ are uniform samples of $S'$ and $S$, respectively.  Let $u$ denote the primary key. We say $\widehat{S'}$ and $\widehat{S}$ correspond if and only if:
\vspace{0.5em}
\begin{itemize}[noitemsep]
\item Uniformity: $\widehat{S'}$ and $\widehat{S}$ are uniform random samples of $S'$ and $S$ respectively with a sampling ratio of $m$
\item Removal of Superfluous Rows: $D = \{\forall s \in \widehat{S} \; \nexists s' \in S': s(u) = s'(u)\}$, $D \cap \widehat{S'} = \emptyset$ 
\item Sampling of Missing Rows: $I = \{\forall s' \in \widehat{S'} \; \nexists s \in S: s(u) = s'(u)\}$, $\mathbb{E}(\mid I \cap \widehat{S'} \mid) = m\mid I \mid $ 
\item Key Preservation for Updated Rows: For all $s\in \widehat{S}$ and not in $D$ or $I$, $s' \in \widehat{S}': s'(u) = s(u)$.
\end{itemize}
\vspace{-.25em}
\label{correspondence}
\end{property}
\vspace{-1.0em}

\section{Query Result Estimation}
\label{correction}
\svc returns two corresponding samples, $\widehat{S}$ and $\widehat{S}'$.
$\widehat{S}$ is a ``dirty" sample (sample of the stale view) and $\widehat{S}'$ is a ``clean" sample (sample of the up-to-date view).
In this section, we first discuss how to estimate query results using the two corresponding samples. 
Then, we discuss the bounds and guarantees on different classes of aggregate queries.


\subsection{Result Estimation}\label{re}
Suppose, we have an aggregate query $q$ of the following form:
\begin{lstlisting} [mathescape]
$\;\;\;\;$q(View) := SELECT f(attr) FROM View WHERE cond(*)
\end{lstlisting}
We quantify the staleness $c$ of the aggregate query result as the difference
between the query applied to the stale view $S$ compared to the up-to-date view $S'$:
\[
q(S') = q(S) + c
\]
The objective of this work is to estimate $q(S')$.
In the Approximate Query Processing (AQP) literature, sample-based estimates have been well studied \cite{OlkenR86, AgarwalMPMMS13}.
This inspires our first estimation algorithm, \svcnospace+AQP, which uses \svc to materialize a sample view and an AQP-style
result estimation technique.

\vspace{0.25em}

\noindent\textbf{SVC+AQP: }  Given a clean sample view $\widehat{S}'$, the query $q$, and a scaling factor $s$, 
we apply the query to the sample and scale it by $s$:
\[
q(S') \approx s \cdot q(\widehat{S}')
\]
For example, for the \sumfunc and \countfunc the scaling factor is $\frac{1}{m}$. For the \avgfunc the scaling factor is 1.
Refer to \cite{OlkenR86, AgarwalMPMMS13} for a detailed discussion on the scaling factors.

\svcnospace+AQP returns what we call a direct estimate of $q(S')$.
We could, however, try to estimate $c$ instead.
Since we have the stale view $S$, we could run the query $q$ on the full stale view and 
estimate the difference $c$ using the samples $\widehat{S}$ and $\widehat{S}'$.
We call this approach \svcnospace+CORR, which represents calculating a correction to $q(S)$ instead of a direct estimate.

\vspace{0.25em}

\noindent\textbf{SVC+CORR: } Given a clean sample $\widehat{S}'$, its corresponding dirty sample $\widehat{S}$, a query q, and a scaling factor $s$:
\begin{enumerate}[noitemsep]
\item Apply \svcnospace+AQP to $\widehat{S}'$:
$r_{est\_fresh} = s \cdot q(\widehat{S}') $
\item Apply \svcnospace+AQP to $\widehat{S}$:
$r_{est\_stale} = s \cdot q(\widehat{S}) $ 
\item Apply q to the full stale view:
$r_{stale} = q(S) $
\item Take the difference between (1) and (2) and add it to (3):
\[
q(S') \approx r_{stale} + (r_{est\_fresh} - r_{est\_stale})
\]
\end{enumerate}

A commonly studied property in the AQP literature is unbiasedness.
An unbiased result estimate means that the expected value of the estimate over all samples of the same size is $q(S')$ \footnote{\scriptsize The \avgfunc query is considered conditionally unbiased in some works.}.
We can prove that if \svcnospace+AQP is unbiased (there is an AQP method that gives an unbiased result) then \svcnospace+CORR also gives unbiased results.
\begin{lemma}\label{lemma:unbiased}
If there exists an unbiased sample estimator for q(S') then there exists an unbiased sample estimator for c.
\end{lemma}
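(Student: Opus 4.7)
The plan is to exhibit an explicit unbiased estimator of $c$ built from the hypothesized unbiased estimator of $q(S')$, using linearity of expectation together with the fact that $\widehat{S}$ and $\widehat{S}'$ are uniform random samples of $S$ and $S'$ respectively (the Correspondence property from Section~\ref{sampling}).

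First I would restate the setup: by assumption there is an estimator of the form $\widehat{q}(\widehat{S}') = s\cdot q(\widehat{S}')$ (or more generally any sample-based functional) satisfying $\mathbb{E}[\widehat{q}(\widehat{S}')] = q(S')$, where the expectation is over the uniform sampling that produces $\widehat{S}'$ from $S'$. Since $\widehat{S}$ is, by Property~1 (Correspondence), a uniform random sample of $S$ with the same sampling ratio $m$, the \emph{same} estimator applied to $\widehat{S}$ satisfies $\mathbb{E}[\widehat{q}(\widehat{S})] = q(S)$; that is, unbiasedness of the estimator depends only on the sampling distribution, not on which relation it is drawn from.

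Next I would define the candidate estimator for $c$ used by SVC+CORR:
\[
\widehat{c} \;:=\; \widehat{q}(\widehat{S}') - \widehat{q}(\widehat{S}).
\]
By linearity of expectation,
\[
\mathbb{E}[\widehat{c}] \;=\; \mathbb{E}[\widehat{q}(\widehat{S}')] - \mathbb{E}[\widehat{q}(\widehat{S})] \;=\; q(S') - q(S) \;=\; c,
\]
where the last equality uses the defining relation $q(S') = q(S) + c$. Hence $\widehat{c}$ is an unbiased estimator of $c$, which is exactly what the lemma asserts.

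There is essentially no hard step here; the only subtle point worth flagging in the write-up is that $\widehat{S}$ and $\widehat{S}'$ are \emph{correlated} (they share primary keys, by the hash-based construction), so one cannot argue via independence. Linearity of expectation, however, does not require independence, so the argument goes through unchanged. I would briefly remark on this to preempt the natural objection, and also note that the result is stated existentially (``there exists an unbiased estimator for $c$''), so producing the single explicit estimator $\widehat{c}$ above suffices.
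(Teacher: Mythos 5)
Your proof is correct, but it establishes unbiasedness of a different estimator than the one the paper's proof sketch uses, and the distinction is worth noting. The paper takes the estimator $e_q(\widehat{S'}) - q(S)$, where $q(S)$ is the \emph{exact} query result on the full stale view --- a deterministic constant --- so unbiasedness follows immediately from $\mathbb{E}\bigl[e_q(\widehat{S'})\bigr] = q(S')$ and linearity, with no claim needed about $\widehat{S}$ at all. You instead take $\widehat{q}(\widehat{S}') - \widehat{q}(\widehat{S})$, which is the correction term SVC+CORR actually computes (step 4 of the algorithm), and this forces you to additionally argue that the hypothesized estimator remains unbiased when applied to a uniform sample of the \emph{stale} relation $S$. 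That step is a mild strengthening of the lemma's literal hypothesis (``an unbiased sample estimator for $q(S')$''), though it holds for all the sample-mean estimators the paper cares about, and you justify it correctly by observing that unbiasedness depends only on the sampling distribution, not on which population is sampled. What your route buys is a proof about the estimator the system really uses --- the one whose variance $\sigma_S^2 + \sigma_{S'}^2 - 2\,\mathrm{cov}(S,S')$ drives the AQP-versus-CORR comparison in Section~\ref{aqp-v-cor}; what the paper's route buys is a shorter argument under the weaker hypothesis. Your remark that correlation between $\widehat{S}$ and $\widehat{S}'$ is irrelevant to linearity of expectation is apt; since the lemma is existential, either estimator suffices.
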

\begin{proof}[Sketch] 
Suppose, we have an unbiased sample estimator $e_q$ of $q$. 
Then, it follows that $\mathbb{E}\big[e_q(\widehat{S'})\big] = q(S')$
If we substitute in this expression:
$c = \mathbb{E}\big[e_q(\widehat{S'})\big] - q(S) $.
Applying the linearity of expectation:
$ c = \mathbb{E}\big[e_q(\widehat{S'}) - q(S)\big] $
\end{proof}
Some queries do not have unbiased sample estimators, but the bias of their sample estimators can be bounded. Example queries include: \medfunc, \percfunc.
A corollary to the previous lemma, is that if we can bound the bias for our estimator then we can achieve a bounded bias for $c$ as well.

\begin{example}
We can formalize our earlier example query in Section \ref{infexample} in terms of SVC+CORR and SVC+AQP.
Let us suppose the initial query result is $45$.
There now have been new log records inserted into the Log table making the old result stale, and suppose we are working with a sampling ratio of 5\%.
For SVC+AQP, we count the number of videos in the clean sample that currently have counts greater than 100 and scale that result by $\frac{1}{5\%} = 20$. 
If the count from the clean sample is $4$, then the estimate for SVC+AQP is $80$.
For SVC+CORR, we also run SVC+AQP on the dirty sample.
Suppose that there are only two videos in the dirty sample with counts above 100, then the result of running SVC+AQP on the dirty sample is $20\cdot2 = 40$.
We take the difference of the two values $80 - 40 = 40$.
This means that we should correct the old result by $40$ resulting in the estimate of $45+40 = 85$.
\end{example}

\subsection{Confidence Intervals}
To bound our estimates in confidence intervals we explore three cases: (1) aggregates that can be written as sample means, (2) aggregates that can be bounded empirically with a statistical bootstrap, and (3) \minfunc and \maxfunc.
For (1), \sumfunc, \countfunc, and \avgfunc can all be written as sample means.
\sumfunc is the sample mean scaled by the relation size and \countfunc is the mean of the indicator function scaled by the relation size.
In this case, we can get analytic confidence intervals which allows us to analyze the efficiency tradeoffs.
In case (2), for example \medfunc, we lose this property and have to use an empirical technique to bound the results.
Queries such as \minfunc and \maxfunc fall into their own category as they cannot easily be bounded empirically \cite{agarwalknowing}, and we discuss these queries in our Technical Report \cite{technicalReport}.


\subsubsection{Confidence Intervals For Sample Means}
The first case is aggregates that can be expressed as a sample mean (\sumfunc, \countfunc, and \avgfunc)
Sample means for uniform random samples (also called sampling without replacement) converge to the population mean by the Central Limit Theorem (CLT).
Let $\bar{\mu}$ be a sample mean calculated from $k$ samples, $\sigma^2$ be the variance of the sample, and $\mu$ be the population mean. 
Then, the error $(\mu - \bar{\mu})$ is normally distributed:
$
 N(0,\frac{\sigma^2}{k})
$.
Therefore, the confidence interval is given by:
\[
\bar{\mu} \pm \gamma \sqrt{\frac{\sigma^2}{k}}
\]
where $\gamma$ is the Gaussian tail probability value (e.g., 1.96 for 95\%, 2.57 for 99\%).

We discuss how to calculate this confidence interval in SQL for SVC+AQP.
The first step is a query rewriting step where we move the predicate \textsf{cond(*)} into the SELECT clause (1 if true, 0 if false). 
Let \emph{attr} be the aggregate attribute and $m$ be the sampling ratio. 
We define an intermediate result $trans$ which is a table of transformed rows with the first column the 
primary key and the second column defined in terms of \texttt{cond(*)} statement and scaling.
For \sumfunc:
\begin{lstlisting}[mathescape,basicstyle={\scriptsize}]
trans= SELECT pk,1.0/m$\cdot$attr$\cdot$cond(*) as trans_attr FROM s 
\end{lstlisting} 
For \countfunc:
\begin{lstlisting}[mathescape,basicstyle={\scriptsize}]
trans= SELECT pk, 1.0/m $\cdot$ cond(*) as trans_attr FROM s
\end{lstlisting}
For \avgfunc since there is no scaling we do not need to re-write the query:
\begin{lstlisting}[mathescape,basicstyle={\scriptsize}]
trans= SELECT pk,attr as trans_attr FROM s WHERE cond(*) 
\end{lstlisting}

\vspace{0.25em}

\noindent\textbf{SVC+AQP: } The confidence interval on this result is defined 
as:
\begin{lstlisting}[mathescape,basicstyle={\scriptsize}]
SELECT $\gamma\cdot$stdev(trans_attr)/sqrt(count(1)) FROM trans
\end{lstlisting}

\vspace{0.25em}

To calculate the confidence intervals for \svcnospace+CORR we have to look at the statistics of the difference, i.e., $c = q(S) - q(S')$, from a sample.
If all rows in $\widehat{S}$ exist in $\widehat{S}'$, we could use the associativity of addition and subtraction to rewrite this as:
$c = q(S - S')$, where $-$ is the row-by-row difference between $S$ and $S'$.
The challenge is that the missing rows on either side make this ill-defined.
Thus, we defined the following null-handling with a subtraction operator we call $\dot{-}$.
\begin{definition}[Correspondence Subtract] Given an aggregate query, and two corresponding relations $R_1$ and $R_2$ with the schema $(a_1, a_2, ...)$ where $a_1$ is the primary key for $R_1$ and $R_2$, and $a_2$ is the aggregation attribute for the query. 
$\dot{-}$ is defined as a projection of the full outer join on equality of $R_1.a_1 = R_2.a_1$: \[ \Pi_{R_1.a_2 - R_2.a_2} ( R_1 \fullouterjoin R_2 ) \]
Null values $\emptyset$ are represented as~zero.
\end{definition}
Using this operator, we can define a new intermediate result $diff$:
\[diff := trans(\widehat{S}') \dot{-} trans(\widehat{S}) \]

\vspace{0.35em}\noindent\textbf{SVC+CORR: } Then, as in \svcnospace+AQP, we bound the result using the CLT:
\begin{lstlisting}[mathescape,basicstyle={\scriptsize}]
SELECT $\gamma\cdot$stdev(trans_attr)/sqrt(count(1)) FROM diff
\end{lstlisting}

\subsubsection{AQP vs. CORR For Sample Means}\label{aqp-v-cor}
In terms of these bounds, we can analyze how \svcnospace+AQP compares to \svcnospace+CORR for a fixed sample size $k$.
\sloppy
\svcnospace+AQP gives an estimate that is proportional to the variance of the clean sample view: 
$\frac{\sigma_{S'}^2}{k}$.
\svcnospace+CORR to the variance of the \emph{differences}: 
$\frac{\sigma_{c}^2}{k}$.
Since the change is the difference between the stale and up-to-date view, this can be rewritten as
\[\frac{\sigma_{S}^2 + \sigma_{S'}^2 - 2cov(S,S')}{k}\]
Therefore, a correction will have less variance when:
\[\sigma_{S}^2 \le 2cov(S,S')\]

As we saw in the previous section, correspondence correlates the samples.
If the difference is small, i.e., $S$ is nearly identical to $S'$, then $cov(S,S')~\approx~\sigma_{S}^2$. 
This result also shows that there is a point when updates to the stale MV are significant enough that direct estimates are more accurate.
When we cross the break-even point we can switch from using \svcnospace+CORR to \svcnospace+AQP.
\svcnospace+AQP does not depend on $cov(S,S')$ which is a measure of how much the data has changed.
Thus, we guarantee an approximation error of at most $\frac{\sigma_{S'}^2}{k}$.
In our experiments (Figure \ref{exp-1-total}(b)), we evaluate this break even point empirically. 

\subsubsection{Selectivity For Sample Means}
Let $p$ be the selectivity of the query and $k$ be the sample size; that is, a fraction $p$ records from the relation satisfy the predicate.
For these queries, we can model selectivity as a reduction of effective sample size $k\cdot p$ making the
estimate variance: $O(\frac{1}{k*p})$.
Thus, the confidence interval's size is scaled up by $\frac{1}{\sqrt{p}}$.
Just like there is a tradeoff between accuracy and maintenance cost, for a fixed accuracy, 
there is also a tradeoff between answering more selective queries and maintenance cost.

\subsubsection{Optimality For Sample Means}
Optimality in unbiased estimation theory is defined in terms of the variance of the estimate \cite{cox1979theoretical}.
\begin{proposition}
An estimator is called a minimum variance unbiased estimator (MVUE) if it is unbiased and the variance of the estimate is less than or equal to that of any other unbiased estimate.
\end{proposition}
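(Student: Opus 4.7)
The statement in question is definitional rather than assertive: it introduces the term \emph{minimum variance unbiased estimator} (MVUE) and attaches the acronym to the property \textquotedblleft unbiased, with variance no larger than that of any other unbiased estimator.\textquotedblright{} There is no quantified claim (no existence, no uniqueness, no attainment of a numerical bound), so there is, strictly speaking, nothing to prove. A \textquotedblleft proof proposal\textquotedblright{} in the ordinary mathematical sense does not apply to the sentence as worded.

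To the extent one can propose anything, the only task is a bibliographic cross-check: verify that the wording coincides with the standard textbook definition of MVUE (for instance the one in the citation \cite{cox1979theoretical} that the authors invoke in the same paragraph). The standard form says that an estimator $T$ of a parameter $\theta$ is an MVUE within a class $\mathcal{U}$ of unbiased estimators if $T \in \mathcal{U}$ and $\mathrm{Var}_\theta(T) \le \mathrm{Var}_\theta(\tilde T)$ for every $\tilde T \in \mathcal{U}$ and every admissible $\theta$. I would match the authors' phrase \textquotedblleft less than or equal to that of any other unbiased estimate\textquotedblright{} against the universal quantifier over $\mathcal{U}$ in the textbook form, and note that the paper's implicit class is the set of unbiased estimators of $q(S')$ built from a uniform sample of size $k$, with the \textquotedblleft parameter\textquotedblright{} being the true aggregate value $q(S')$.

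The only subtlety I would flag for the reader is that the paper's phrasing suppresses both the parameter dependence and the reference class, leaving those to be inferred from context. That suppression is cosmetic and costs nothing here, but it matters for any substantive follow-up claim, since MVUE status is always relative to the class $\mathcal{U}$. I expect no obstacle of any kind in this step, because no mathematical content is being asserted; the real work will arrive in a later statement where attainment of the MVUE bound by a specific estimator (such as the sample-mean-based SVC+AQP estimator) is actually claimed, and that is where a genuine proof --- invoking completeness, sufficiency, and Lehmann--Scheff\'e, or a direct Cram\'er--Rao calculation --- would be required.
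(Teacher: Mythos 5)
You are right that this ``proposition'' is purely definitional --- the paper itself supplies no proof, simply stating it (with the citation to Cox) as the definition of MVUE before arguing informally that the sample mean attains it in the zero-knowledge setting. Your reading matches the paper's treatment exactly, and your flag about the implicit reference class and parameter is a fair cosmetic observation that does not conflict with anything in the text.
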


A sampled relation $R$ defines a discrete distribution. It is important to note that this distribution is different from the data generating distribution, since even if $R$ has continuous valued attributes $R$ still defines a discrete distribution. Our population is finite and we take a finite sample thus every sample takes on only a discrete set of values. In the general case, this distribution is only described by the set of all of its values (i.e., no smaller parametrized representation). In this setting, the sample mean is an MVUE. In other words, if we make no assumptions about the underlying distribution of values in $R$, SVC+AQP and SVC+CORR are optimal for their respective estimates ($q(S')$ and $c$). Since they estimate different variables, even with optimality SVC+CORR might be more accurate than SVC+AQP and vice versa. 
There are, however, some cases when the assumptions, namely zero-knowledge, of this optimality condition do not hold.
 As a simple counter example, if we knew our data were exactly on a line, a sample size of two is sufficient to answer any aggregate query. 
 However, even for many parametric distributions, the sample mean estimators are still MVUEs, e.g., poisson, bernouilli, binomial, normal, and exponential. It is often difficult and unknown in many cases to derive an MVUE other than a sample mean. 
 Our approach is valid for any choice of estimator if one exists, even though we do the analysis for sample mean estimators and this is the setting in which that estimator is optimal.

\vspace{1.5em}

\subsubsection{Bootstrap Confidence Intervals}
In the second case, we explore bounding queries that cannot be expressed as sample means.
We do not get analytic confidence intervals on our results, nor is it guaranteed that our estimates are optimal.
In AQP, the commonly used technique is called a statistical bootstrap \cite{AgarwalMPMMS13} to empirically bound the results.
In this approach, we repeatedly subsample with replacement from our sample and apply the query to the sample.
This gives us a technique to bound SVC+AQP the details of which can be found in \cite{AgarwalMPMMS13, agarwalknowing, DBLP:conf/sigmod/ZengGMZ14}. 
For SVC+CORR, we have to propose a variant of bootstrap to bound the estimate of $c$.
In this variant, repeatedly estimate $c$ from subsamples and build an empirical distribution for $c$.

\vspace{0.35em}

\noindent\textbf{SVC+CORR: } To use bootstrap to find a 95\% confidence interval:
\begin{enumerate}[noitemsep]
\item Subsample $\widehat{S}_{sub}'$ and $\widehat{S}_{sub}$ with replacement from $\widehat{S}'$ and $\widehat{S}$ respectively
\item Apply SVC+AQP to $\widehat{S}_{sub}'$ and $\widehat{S}_{sub}$
\item Record the difference $\cdot(aqp(\widehat{S}_{sub}')-aqp(\widehat{S}_{sub}))$
\item Return to 1, for k iterations.
\item Return the $97.5$\% and the $2.5$\% percentile of the distribution of results. 
\end{enumerate}

\vspace{-.5em}
\section{Outlier Indexing}\label{outlier}
Sampling is known to be sensitive to outliers \cite{clauset2009power, chaudhuri2001overcoming}.
Power-laws and other long-tailed distributions are common in practice \cite{clauset2009power}.
The basic idea is that we create an index of outlier records (records whose attributes deviate from the mean value greatly) and ensure that these records are included in the sample, since these records greatly increase the variance of the data. 

\subsection{Indices on the Base Relations}
The first step is that the user selects an attribute of any base relation to index and specifies a threshold $t$ and a size limit $k$.
In a single pass of updates (without maintaining the view), the index is built storing references to the records with attributes greater than $t$.
If the size limit is reached, the incoming record is compared to the smallest indexed record and if it is greater then we evict the smallest record.
The same approach can be extended to attributes that have tails in both directions by making the threshold $t$ a range, which takes the highest and the lowest values.
However, in this section, we present the technique as a threshold for clarity.

There are many approaches to select a threshold.
We can use prior information from the base table, a calculation which can be done in the background during the periodic maintenance cycles.
If our size limit is $k$, for a given attribute we can select the the top-k records with that attributes.
Then, we can use that top-k list to set a threshold for our index. 
Then, the attribute value of the lowest record becomes the threshold $t$.
Alternatively, we can calculate the variance of the attribute and set the threshold to represent $c$ standard deviations above the mean.
This threshold can be adaptively set at each maintenance period.


\subsection{Adding Outliers to the Sample}
Given this index, the next question is how we can use this information in our materialized views.
We need to propagate the indices upwards through the expression tree.
We add the condition that the only eligible indices are ones on base relations that are being sampled (i.e., we can push the hash operator down to that relation).
Therefore, in the same iteration as sampling, we can also test the index threshold and add records to the outlier index. 
We formalize the propagation property recursively. 
Every relation can have an outlier index which is a set of attributes and a set of records that exceed the threshold value on those attributes.
The main idea is to treat the indexed records as a sub-relation that gets propagated upwards with the maintenance strategy.
\begin{definition}[outlier index pushup]
Define an outlier index to be a tuple of a set of indexed attributes, and a set of records $(I,O)$. The outlier index propagates upwards with the following rules: 
\begin{itemize}[noitemsep]
\item Base Relations: Outlier indices on base relations are pushed up only if that relation is being sampled, i.e., if the sampling operator can be pushed down to that relation.
\item $\sigma_{\phi}(R)$: Push up with a new outlier index and apply the selection to the outliers $(I,\sigma_{\phi}(O))$ 
\item $\Pi_{(a_1,...,a_k)}(R)$: Push upwards  $(I \cap (a_1,...,a_k), O)$.
\item $\bowtie_{\phi (r1,r2)}(R_1,R_2)$: Push upwards $(I_{1} \cup I_{2}, O_1 \bowtie O_2)$. 
\item $\gamma_{f,A}(R)$: For group-by aggregates, we set $I$ to be the aggregation attribute. For the outlier index, we do the following steps. (1) Apply the aggregation to the outlier index $\gamma_{f,A}(O)$, (2) for all distinct $A$ in $O$ select the row in $\gamma_{f,A}(R)$ with the same $A$, and (3) this selection is the new set of outliers $O$. 
\item $R_1 \cup R_2$: Push up with a new outlier index $(I_1 \cap I_2, O_1 \cup O_2)$. The set of index attributes is combined with an intersection to avoid missed outliers.
\item $R_1 \cap R_2$: Push up with a new outlier index $(I_1 \cap I_2, O_1 \cap O_2)$.
\item $R_1 - R_2$: Push up with a new outlier index $(I_1 \cup I_2, O_1 - O_2)$.
\end{itemize}
\end{definition}

For all outlier indices that can propagate to the view (i.e., the top of the tree), we get a final set $O$ of records. 
Given these rules, $O$ is, in fact, a subset of our materialized view $S'$.
Thus, our query processing can take advantage of the theory described in the previous section to incorporate the set $O$ into our results.
We implement the outlier index as an additional attribute on our sample with a boolean flag true or false if it is an outlier indexed record.
If a row is contained both in the sample and the outlier index, the outlier index takes precedence.
This ensures that we do not double count the outliers.

\subsection{Query Processing}\label{oqp} 
For result estimation, we can think of our sample $\hat{S'}$ and our outlier index $O$ as two distinct parts.
Since $O \subset S'$, and we give membership in our outlier index precedence, our sample is actually a sample restricted to the set $\widehat{(S'-O)}$.   
For a given query, let $c_{reg}$ be the correction calculated on $\widehat{(S'-O)}$ using the technique proposed in the previous section and adjusting the sampling ratio $m$ to account for outliers removed from the sample.
We can also apply the technique to the outlier set $O$ since this set is deterministic the sampling ratio for this set is $m=1$, and we call this result $c_{out}$.
Let $N$ be the count of records that satisfy the query's condition and $l$ be the number of outliers that satisfy the condition.
Then, we can merge these two corrections in the following way:
$
 v = \frac{N-l}{N}c_{reg} + \frac{l}{N}c_{out}
$.
For the queries in the previous section that are unbiased, this approach preserves unbiasedness.
Since we are averaging two unbiased estimates $c_{reg}$ and $c_{out}$, the linearity of the expectation operator preserves this property.
Furthermore, since $c_{out}$ is deterministic (and in fact its bias/variance is 0), $c_{reg}$ and $c_{out}$ are uncorrelated making the bounds described in the previous section applicable as well.

\begin{example}
We chose an attribute in the base data to index, for example \texttt{duration}, and an example threshold of 1.5 hours.
We apply the rules to push the index up, and this materializes the entire set of rows whose duration is longer than 1.5 hours.
For SVC+AQP, we run the query on the set of clean rows with durations longer than 1.5 hours.
Then, we use the update rule in Section \ref{oqp} to update the result based on the number of records in the index and the total size of the view.
For SVC+CORR, we additionally run the query on the set of dirty rows with durations longer than 1.5 hours and take the difference between SVC+AQP.
As in SVC+AQP, we use the update rule in Section \ref{oqp} to update the result based on the number of records in the index and the total size of the view.
\end{example}


\section{Results}
\label{exp}
We evaluate \svc first on a single node MySQL database to evaluate its accuracy, performance, and efficiency in a variety of materialized view 
scenarios.
Then, we evaluate the outlier indexing approach in terms of improved query accuracy and also evaluate the overhead associated with using the index.
After evaluation on the benchmark, we present an application of server log analysis with a dataset from a video streaming company, Conviva.

\subsection{Experimental Setup} 
\noindent\textbf{Single-node Experimental Setup: }
Our single node experiments are run on a r3.large Amazon EC2 node (2x Intel Xeon E5-2670, 15.25 GB Memory, and 32GB SSD Disk) with a MySQL version 5.6.15 database.
These experiments evaluate views from a 10GB TPCD-Skew dataset.
TPCD-Skew dataset \cite{tpcdskew} is based on the Transaction Processing Council's benchmark
schema (TPCD) but is modified so that it generates a dataset with values drawn from a Zipfian distribution instead of uniformly.
The Zipfian distribution \cite{mitzenmacher2004brief} is a long-tailed distribution with a single parameter $z=\{1,2,3,4\}$ where a larger
value means a more extreme tail and $z=1$ corresponds to the basic TPCD benchmark. 
In our experiments, we use use $z=2$ unless otherwise noted.
The incremental maintenance algorithm used in our experiments is the ``change-table" or ``delta-table" method used in numerous works in incremental maintenance \cite{gupta1995maintenance,gupta2006incremental, DBLP:journals/vldb/KochAKNNLS14}.
In all of the applications, the updates are kept in memory in a temporary table, and we discount this loading time from our experiments.
We build an index on the primary keys of the view, but not on the updates.
Below we describe the view definitions and the queries on the views\footnote{\scriptsize Refer to our extended paper on more details about the experimental setup \cite{technicalReport}.}:

\emph{Join View:} In the TPCD specification, two tables receive insertions and updates: \textsf{lineitem} and \textsf{orders}.
Out of 22 parametrized queries in the specification, 12 are group-by aggregates of the join of \textsf{lineitem} and \textsf{orders} (Q3, Q4, Q5, Q7, Q8, Q9, Q10, Q12, Q14, Q18, Q19, Q21).
Therefore, we define a materialized view of the foreign-key join of \textsf{lineitem} and \textsf{orders}, and compare incremental view maintenance and \svc.
We treat the 12 group-by aggregates as queries on the view.


\emph{Complex Views:} Our goal is to demonstrate the applicability of \svc outside of simple materialized views that include nested queries and other more complex relational algebra.
We take the TPCD schema and denormalize the database, and treat each of the 22 
TPCD queries as views on this denormalized schema. 
The 22 TPCD queries are actually parametrized queries where parameters, such as the selectivity of the predicate, are randomly set by the TPCD \textsf{qgen} program.
Therefore, we use the program to generate 10 random instances of each query and use each random instance as a materialized view.
10 out of the 22 sets of views can benefit from \svc.
For the 12 excluded views, 3 were static (i.e, there are no updates to the view based on the TPCD workload), and the remaining 9 views have a small cardinality not making them suitable for sampling.

For each of the views, we generated \emph{queries on the views}.
Since the outer queries of our views were group by aggregates, we picked a random attribute $a$ from the group by clause and a random attribute $b$ from aggregation.
We use $a$ to generate a predicate.
For each attribute $a$, the domain is specified in the TPCD standard.
We select a random subset of this domain, e.g., if the attribute is country then the predicate can be $\text{countryCode} > 50$ and $\textsf{countryCode} < 100$.
We generated 100 random \sumfunc, \avgfunc, and \countfunc queries for each view.

\noindent\textbf{Distributed Experimental Setup: }
We evaluate \svc on Apache Spark~1.1.0 with 1TB of logs from a video streaming company, Conviva \cite{conviva}.
This is a denormalized user activity log corresponding to video views and various metrics such as data transfer rates, and latencies.
Accompanying this data is a four month trace of queries in SQL.
We identified 8 common summary statistics-type queries that calculated engagement and error-diagnosis metrics.
These 8 queries defined the views in our experiments.
We populated these view definitions using the first 800GB of user activity log records.  
We then applied the remaining 200GB of user activity log records as the updates (i.e., in the order they arrived) in our experiments.
We generated aggregate random queries over this view by taking either random time ranges or random subsets of customers.

\subsubsection{Metrics and Evaluation}

\noindent\textbf{No maintenance (Stale): } The baseline for evaluation is not applying any maintenance to the materialized view.

\noindent\textbf{Incremental View Maintenance (IVM): } We apply incremental view maintenance (change-table based maintenance \cite{gupta1995maintenance,gupta2006incremental, DBLP:journals/vldb/KochAKNNLS14}) to the full view.

\noindent\textbf{\svcnospace+AQP: } We maintain a sample of the materialized view using \svc and estimate the result with AQP-style estimation technique. 

\noindent\textbf{\svcnospace+CORR: } We maintain a sample of the materialized view using \svc and process queries on the view using the correction which applies the AQP to both the clean and dirty samples, and uses both estimates to correct a stale query result.

\vspace{0.5em}
Since \svc has a sampling parameter, we denote a sample size of $x \% $ as \svcnospace+CORR-x or \svcnospace+AQP-x, respectively. 
To evaluate accuracy and performance, we define the following metrics:

\noindent\textbf{Relative Error: } For a query result $r$ and an incorrect result $r'$, the relative error is $\frac{\mid r-r' \mid}{r}.$
When a query has multiple results (a group-by query), then, unless otherwise noted, relative error is defined as the median over all the errors.

\noindent\textbf{Maintenance Time: } We define the maintenance time as the time needed to produce the up-to-date view for incremental view maintenance, and the time needed to produce the up-to-date sample in \svc. 

\subsection{Join View}
In our first experiment, we evaluate how \svc performs on a materialized view of the join of \textsf{lineitem} and \textsf{orders}.
We generate a 10GB base TPCD-Skew dataset with skew $z=2$, and derive the view from this dataset.
We first generate 1GB (10\% of the base data) of updates (insertions and updates to existing records), and vary the sample size.

\textbf{Performance:}
Figure \ref{exp-1-samplesize}(a) shows the maintenance time of \svc as a function of sample size.
With the bolded dashed line, we note the time for full IVM. 
For this materialized view, sampling allows for significant savings in maintenance time; albeit for approximate answers.
While full incremental maintenance takes 56 seconds, \svc with a 10\% sample can complete in 7.5 seconds.

The speedup for \svc-10
In the next figure, Figure \ref{exp-1-samplesize}(b), we evaluate this speedup. 
We fix the sample size to 10\% and plot the speedup of \svc compared to IVM while varying the size of the updates.
On the x-axis is the update size as a percentage of the base data.
For small update sizes, the speedup is smaller, 6.5x for a 2.5\% (250MB) update size.
As the update size gets larger, \svc becomes more efficient, since for a 20\% update size (2GB), the speedup is 10.1x. 
The super-linearity is because this view is a join of \textsf{lineitem} and \textsf{orders} and we assume that there is not a join index on the updates.
Since both tables are growing sampling reduces computation super-linearly. 

\textbf{Accuracy:}
At the same design point with a 10\% sample, we evaluate the accuracy of \svc.
In Figure \ref{exp-1-acc}, we answer TPCD queries with this view.
The TPCD queries are group-by aggregates and we plot the median relative error for \svcnospace+CORR, No Maintenance, and \svcnospace+AQP.
On average over all the queries, we found that \svcnospace+CORR was 11.7x more accurate than the stale baseline, and 3.1x more accurate than applying \svcnospace+AQP to the sample.

\textbf{\svcnospace+CORR vs. \svcnospace+AQP:}
While more accurate, it is true that \svcnospace+CORR moves some of the computation from maintenance to query execution.
\svcnospace+CORR calculates a correction to a query on the full materialized view.
On top of the query time on the full view (as in IVM) there is additional time to calculate a correction from a sample.
On the other hand \svcnospace+AQP runs a query only on the sample of the view.
We evaluate this overhead in Figure \ref{exp-1-total}(a), where we compare the total maintenance time and query execution time.
For a 10\% sample \svcnospace+CORR required 2.69 secs to execute a \sumfunc over the whole view, IVM required 2.45 secs, and  \svcnospace+AQP required 0.25 secs.
However, when we compare this overhead to the savings in maintenance time it is small.

\svcnospace+CORR is most accurate when the materialized view is less stale as predicted by our analysis in Section \ref{aqp-v-cor}.
On the other hand \svcnospace+AQP is more robust to the staleness and gives a consistent relative error.
The error for \svcnospace+CORR grows proportional to the staleness.
In Figure \ref{exp-1-total}(b), we explore which query processing technique, \svcnospace+CORR or \svcnospace+AQP, should be used.
For a 10\% sample, we find that \svcnospace+CORR is more accurate until the update size is 32.5\% of the base data.


\begin{figure}[t]
\centering
\includegraphics[scale=0.14]{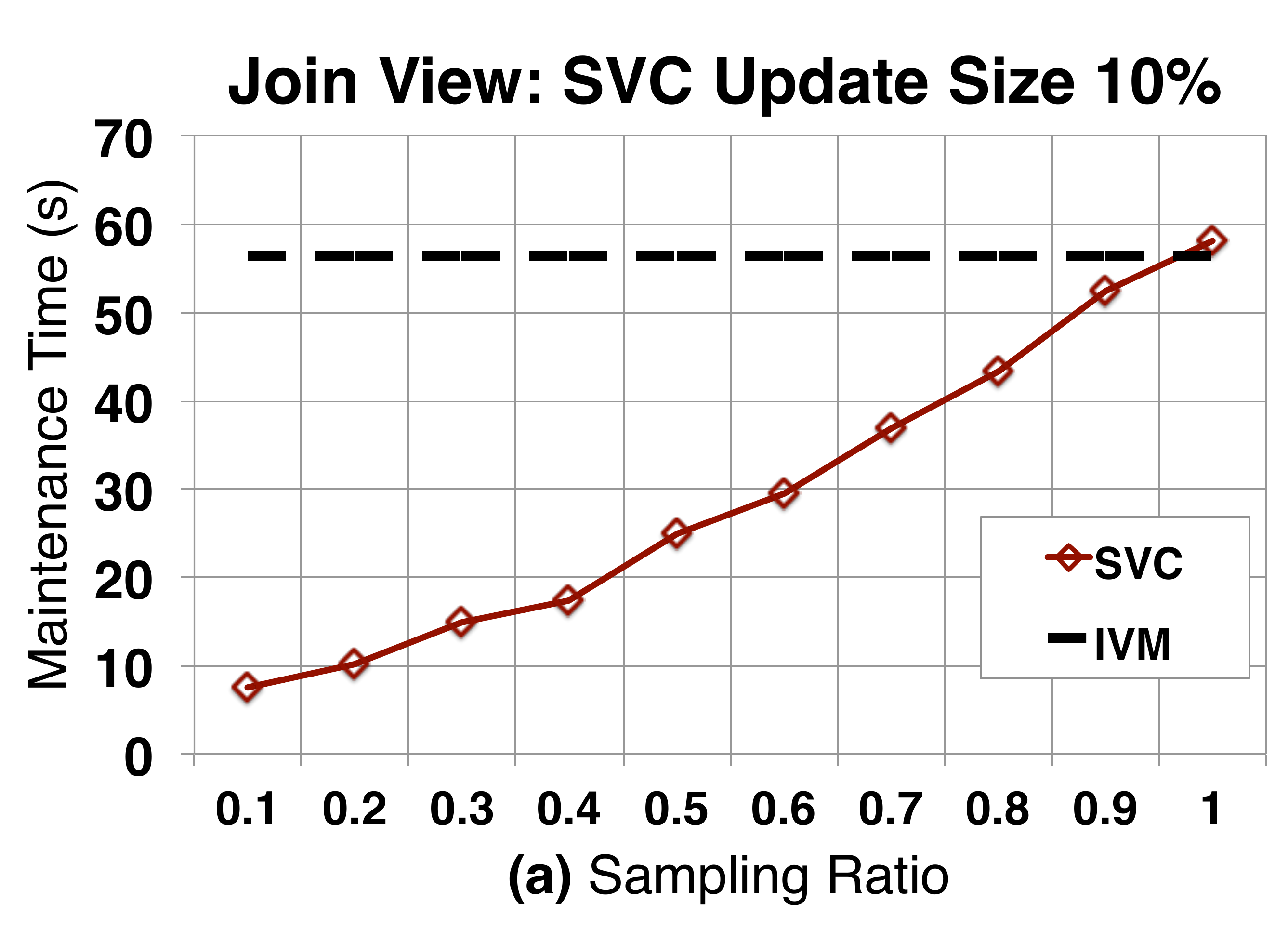}
\includegraphics[scale=0.14]{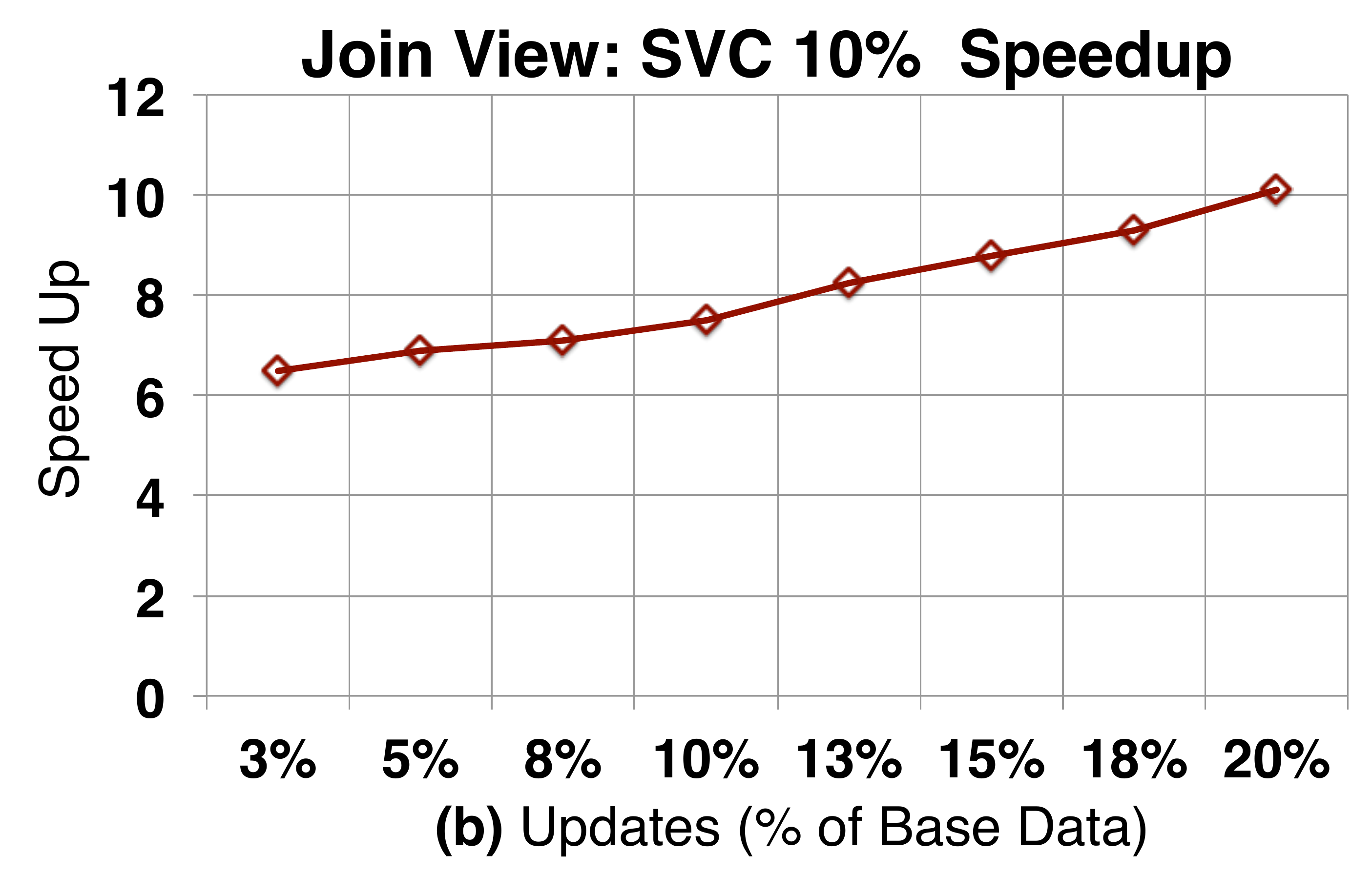}\vspace{-1em}
 \caption{(a) On a 10GB view with 1GB of insertions and updates, we vary the sampling ratio and measure the maintenance time of \svc. (b) For a fixed sampling ratio of 10\%, we vary the update size and plot the speedup compared to full incremental maintenance. \vspace{-.5em}\label{exp-1-samplesize}}\vspace{-0.5em}
\end{figure}

\begin{figure}[t]
\centering
\includegraphics[scale=0.15]{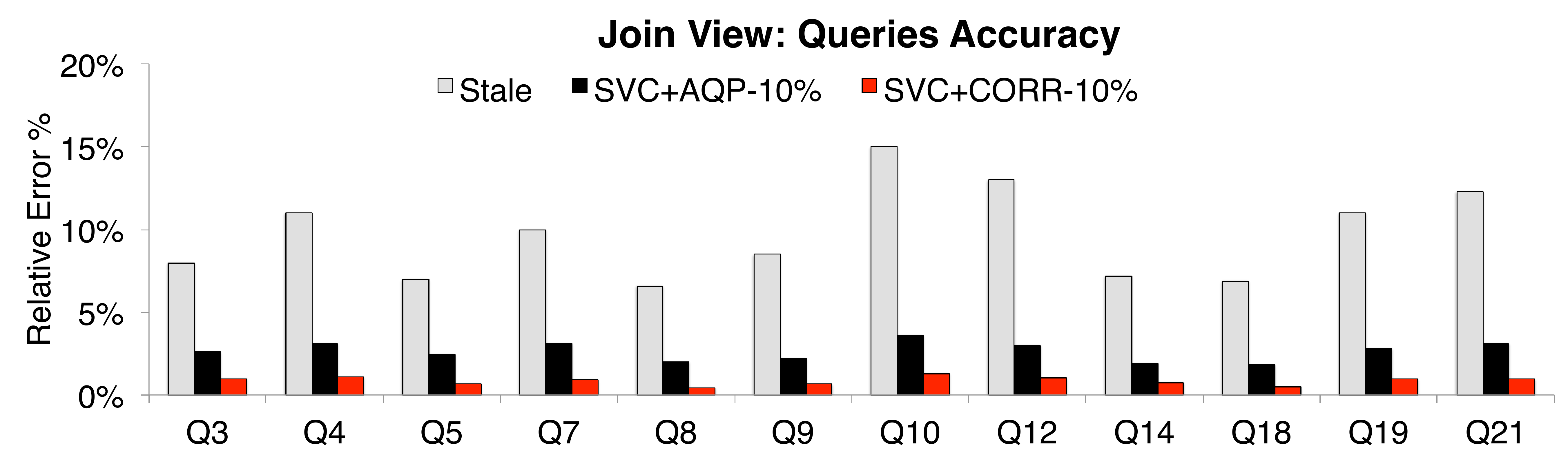}\vspace{-0.5em}
 \caption{For a fixed sampling ratio of 10\% and update size of 10\% (1GB), we generate 100 of each TPCD parameterized queries and answer the queries using the stale materialized view, \svcnospace+CORR, and \svcnospace+AQP. We plot the median relative error for each query. \label{exp-1-acc}}\vspace{-0.5em}
\end{figure}

\begin{figure}[t]
\centering
 \includegraphics[scale=0.13]{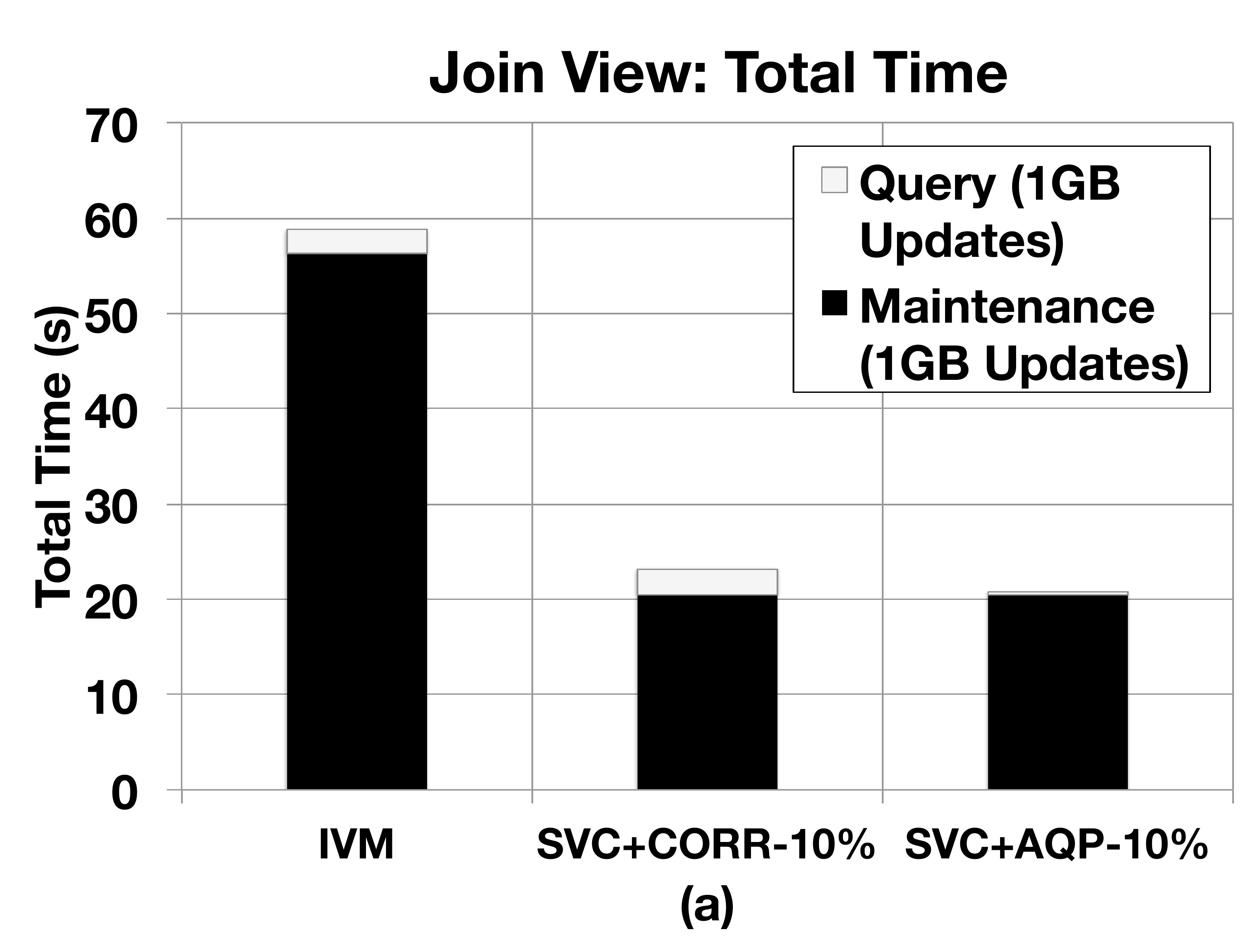}
 \includegraphics[scale=0.13]{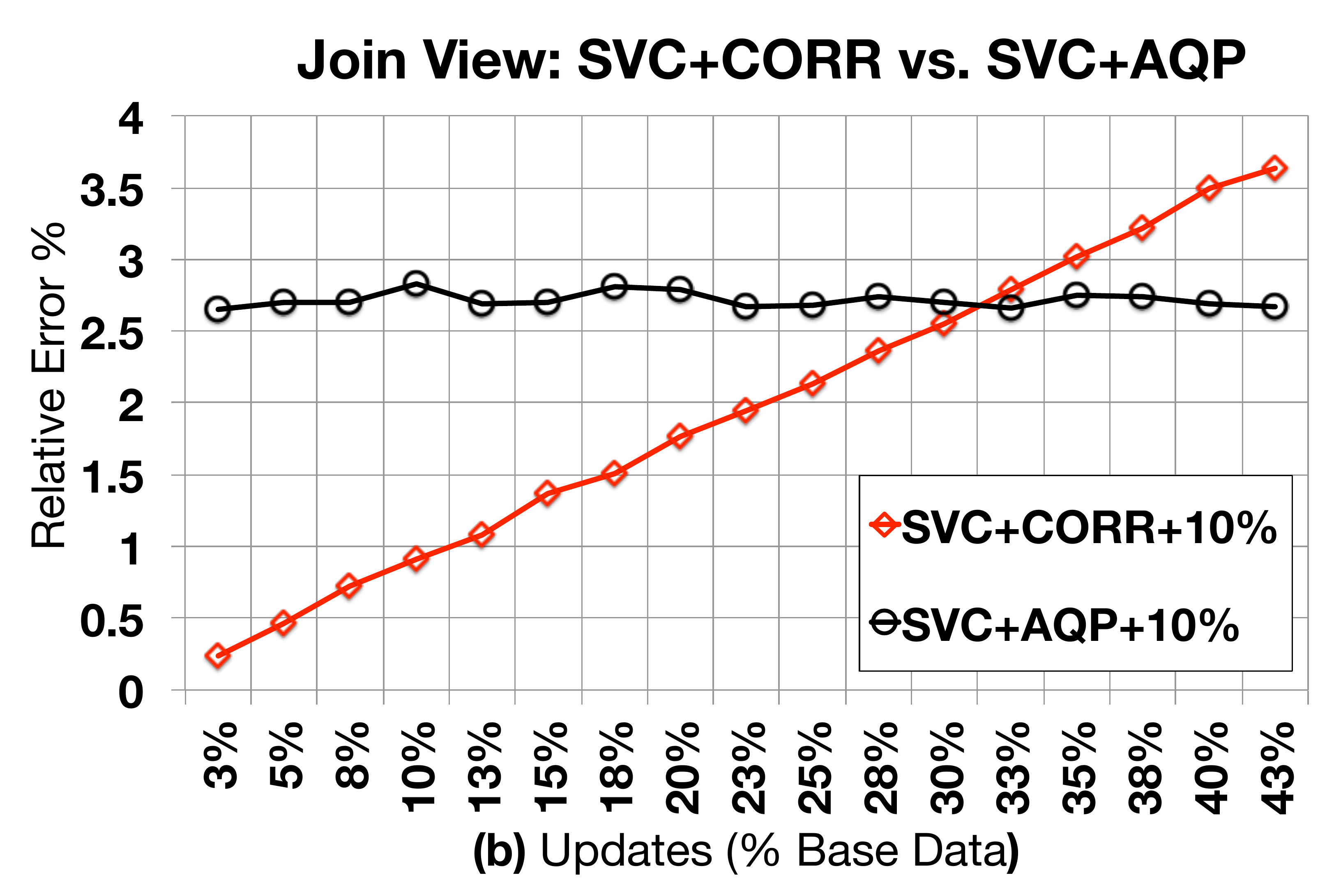}
  \caption{(a) For a fixed sampling ratio of 10\% and update size of 10\% (1GB), we measure the total time incremental maintenance + query time. (b) \svcnospace+CORR is more accurate than \svcnospace+AQP until a break even point. \vspace{-1em} \label{exp-1-total}}
\end{figure}

\subsection{Complex Views}\label{complexviews}
\begin{figure}[t]
\centering
 \includegraphics[scale=0.16]{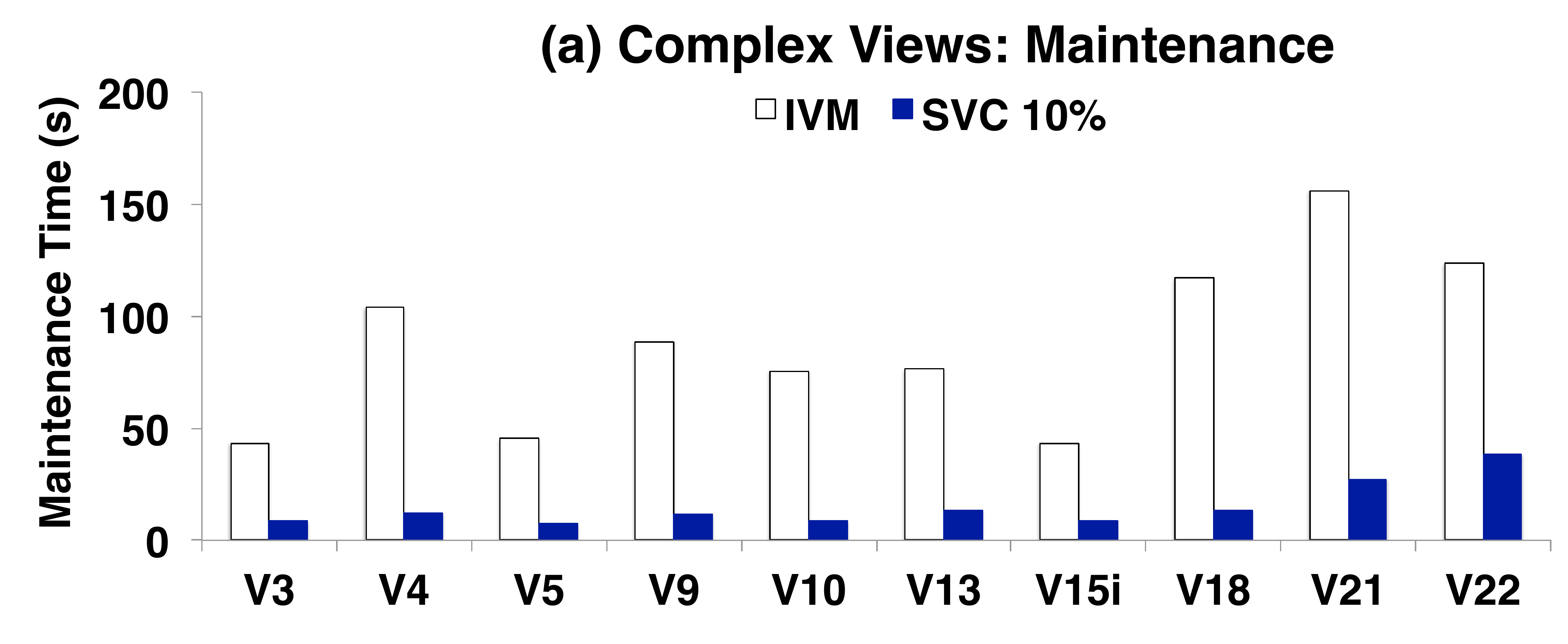}
 \includegraphics[scale=0.16]{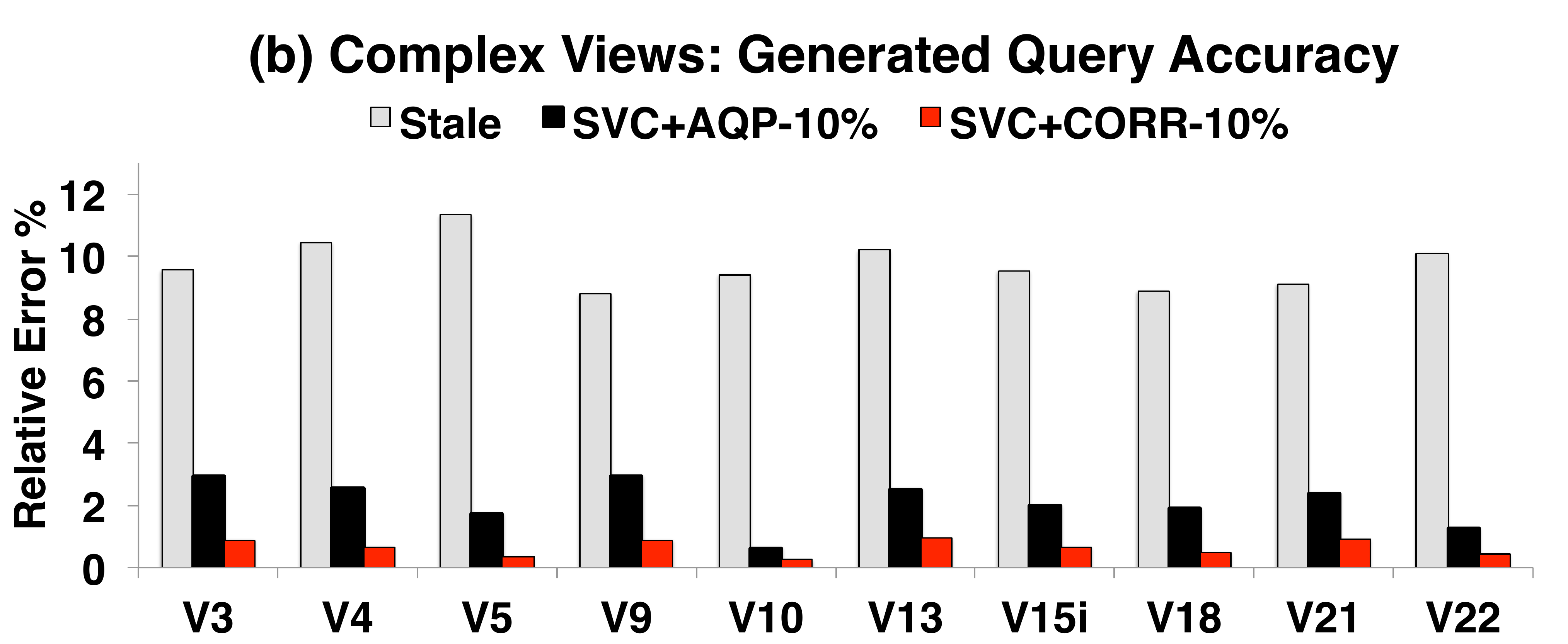}
 \caption{(a) For 1GB update size, we compare maintenance time and accuracy of \svc with a 10\% sample on different views. V21 and V22 do not benefit as much from \svc due to nested query structures. (b) For a 10\% sample size and 10\% update size, \svcnospace+CORR is more accurate than \svcnospace+AQP and No Maintenance.\label{exp3-acc}}
\end{figure}\vspace{-0.5em}
In this experiment, we demonstrate the breadth of views supported by \svc by using the TPCD queries as materialized views.
We generate a 10GB base TPCD-Skew dataset with skew $z=2$, and derive the views from this dataset.
We first generate 1GB (10\% of the base data) of updates (insertions and updates to existing records), and vary the sample size.
Figure \ref{exp3-acc} shows the maintenance time for a 10\% sample compared to the full view.
This experiment illustrates how the view definitions plays a role in the efficiency of our approach.
For the last two views, V21 and V22, we see that sampling does not lead to as large of speedup indicated in our previous experiments.  
This is because both of those views contain nested structures which block the pushdown of hashing.
V21 contains a subquery in its predicate that does not involve the primary key, but still requires a scan of the base relation to evaluate.
V22 contains a string transformation of a key blocking the push down.
These results are consistent with our previous experiments showing that \svc is faster than IVM and more accurate than \svcnospace+AQP and no maintenance.

\subsection{Outlier Indexing}

\begin{figure}[t]
\centering
 \includegraphics[scale=0.14]{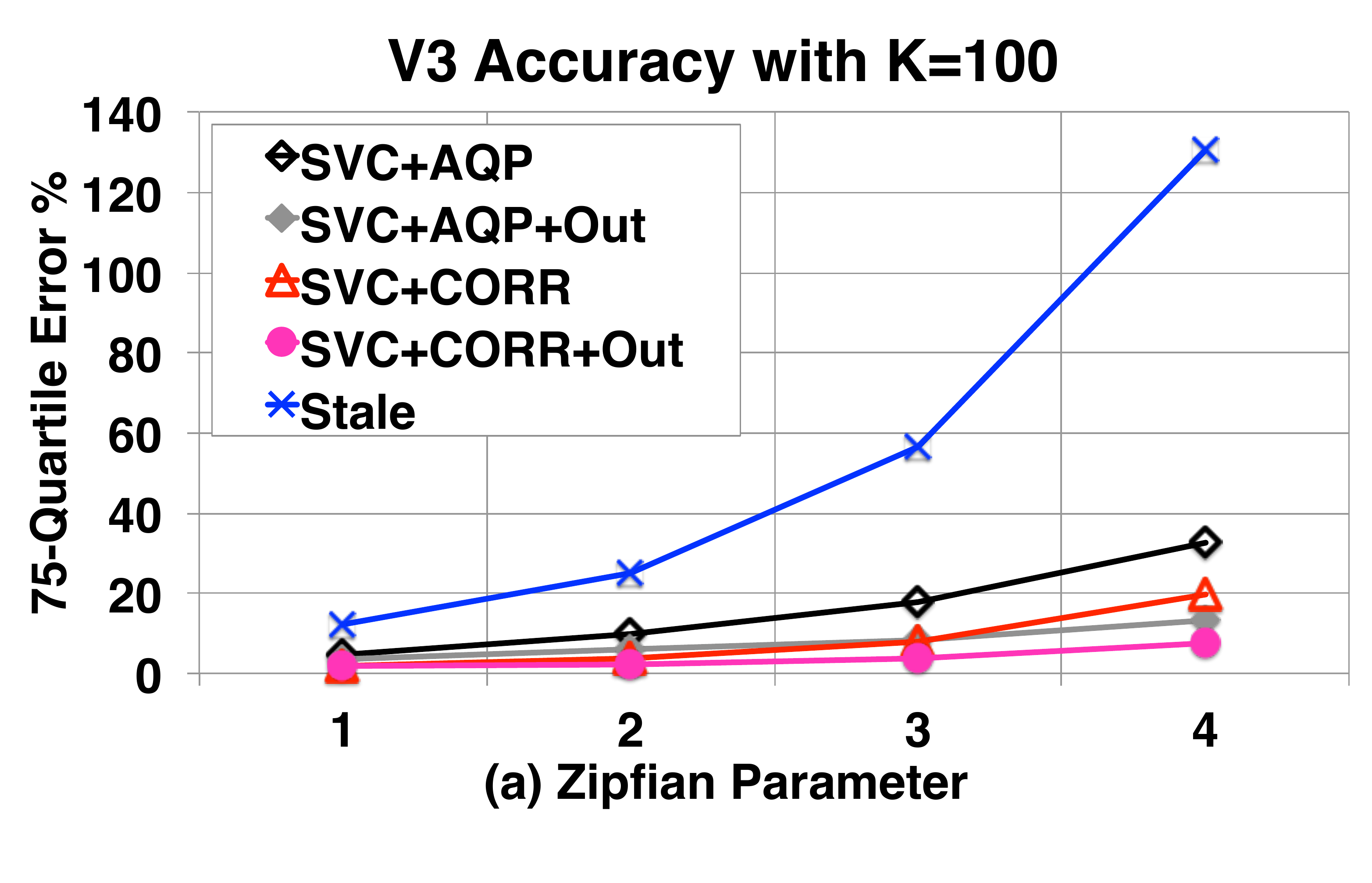}
 \includegraphics[scale=0.14]{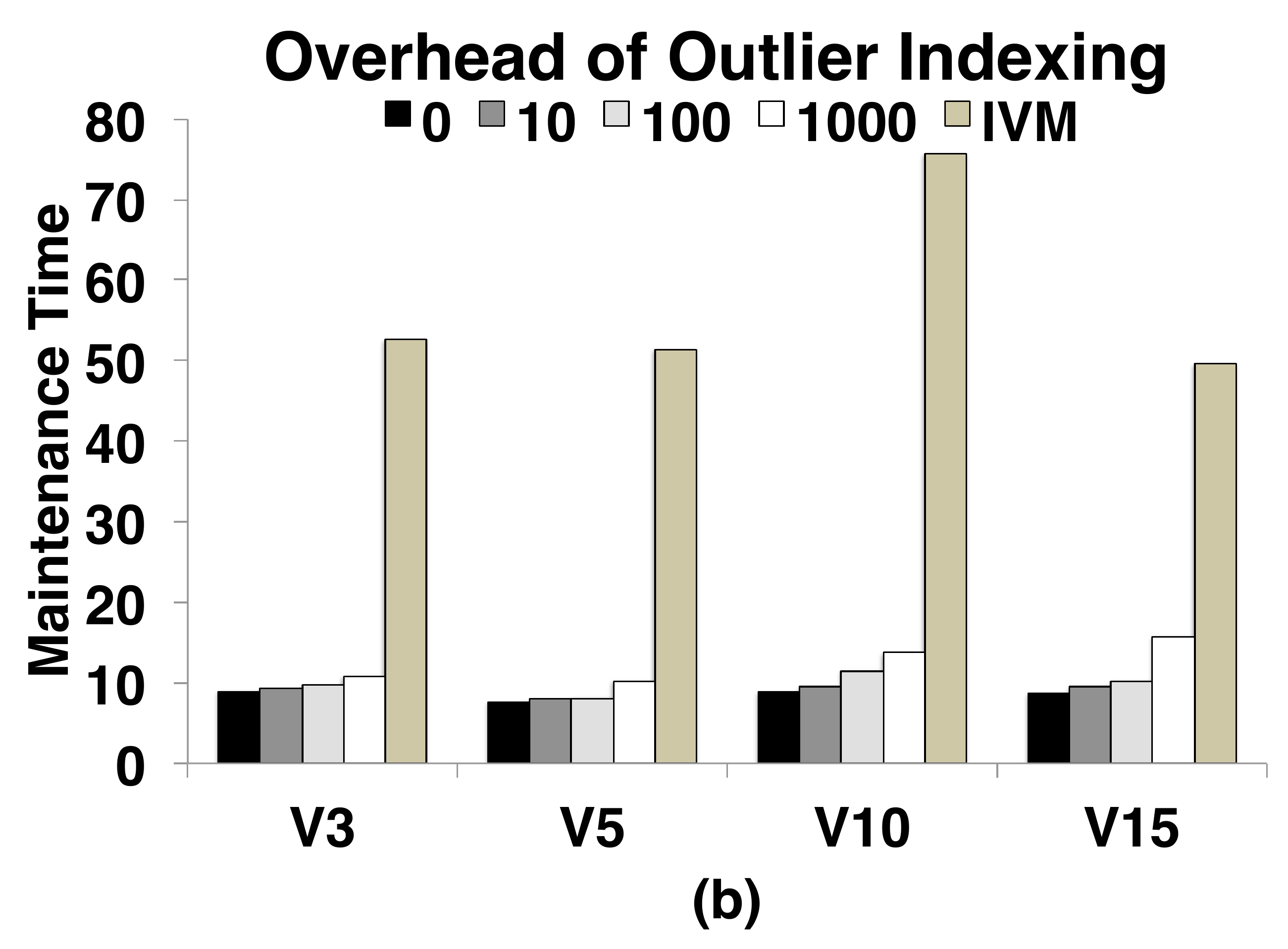}\vspace{-1em}
 \caption{(a) For one view V3 and 1GB of updates, we plot the 75\% quartile error with different techniques as we vary the skewness of the data. (b) While the outlier index adds an overhead this is small relative to the total maintenance time.\label{exp5-oi}}
\end{figure}
In our next experiment, we evaluate our outlier indexing with the top-k strategy described in Section \ref{outlier}.
In this setting, outlier indexing significantly helps for both SVC+AQP and SVC+CORR. 
We index the \textsf{l\_extendedprice} attribute in the \textsf{lineitem} table.
We evaluate the outlier index on the complex TPCD views.
We find that four views: V3, V5, V10, V15, can benefit from this index with our push-up rules. 
These are four views dependent on \textsf{l\_extendedprice} that were also in the set of ``Complex'' views chosen before.

\vspace{1em}

In our first outlier indexing experiment (Figure \ref{exp5-oi}(a)), we analyze V3.
We set an index of 100 records, and applied \svcnospace+CORR and \svcnospace+AQP to views derived from a dataset with a skew parameter $z=\{1,2,3,4\}$. 
We run the same queries as before, but this time we measure the error at the 75\% quartile.
We find in the most skewed data \svc with outlier indexing reduces query error by a factor of 2.
Next, in Figure \ref{exp5-oi} (b), we plot the overhead for outlier indexing for V3 with an index size of 0, 10, 100, and 1000.
While there is an overhead, it is still small compared to the gains made by sampling the maintenance strategy.
We note that none of the prior experiments used an outlier index. 
The caveat is that these experiments were done with moderately skewed data with Zipfian parameter = 2, if this parameter is set to 4 then the 75\% quartile query estimation error is nearly 20\% (Figure \ref{exp5-oi}a).
Outlier indexing always improves query results as we are reducing the variance of the estimation set, however, this reduction in variance is largest when there is a longer tail.

\subsection{Conviva}
\begin{figure}[t]
\centering
 \includegraphics[scale=0.16]{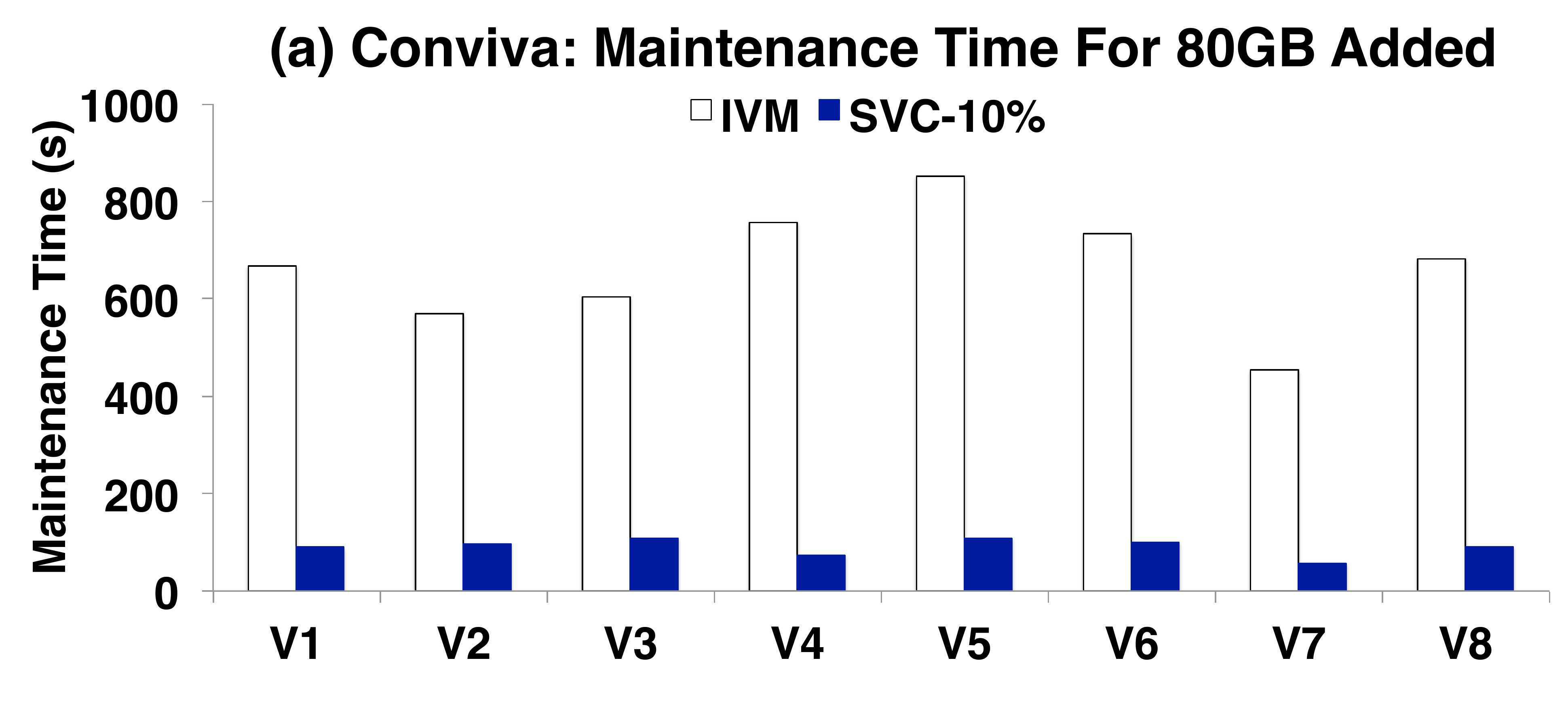}
 \includegraphics[scale=0.16]{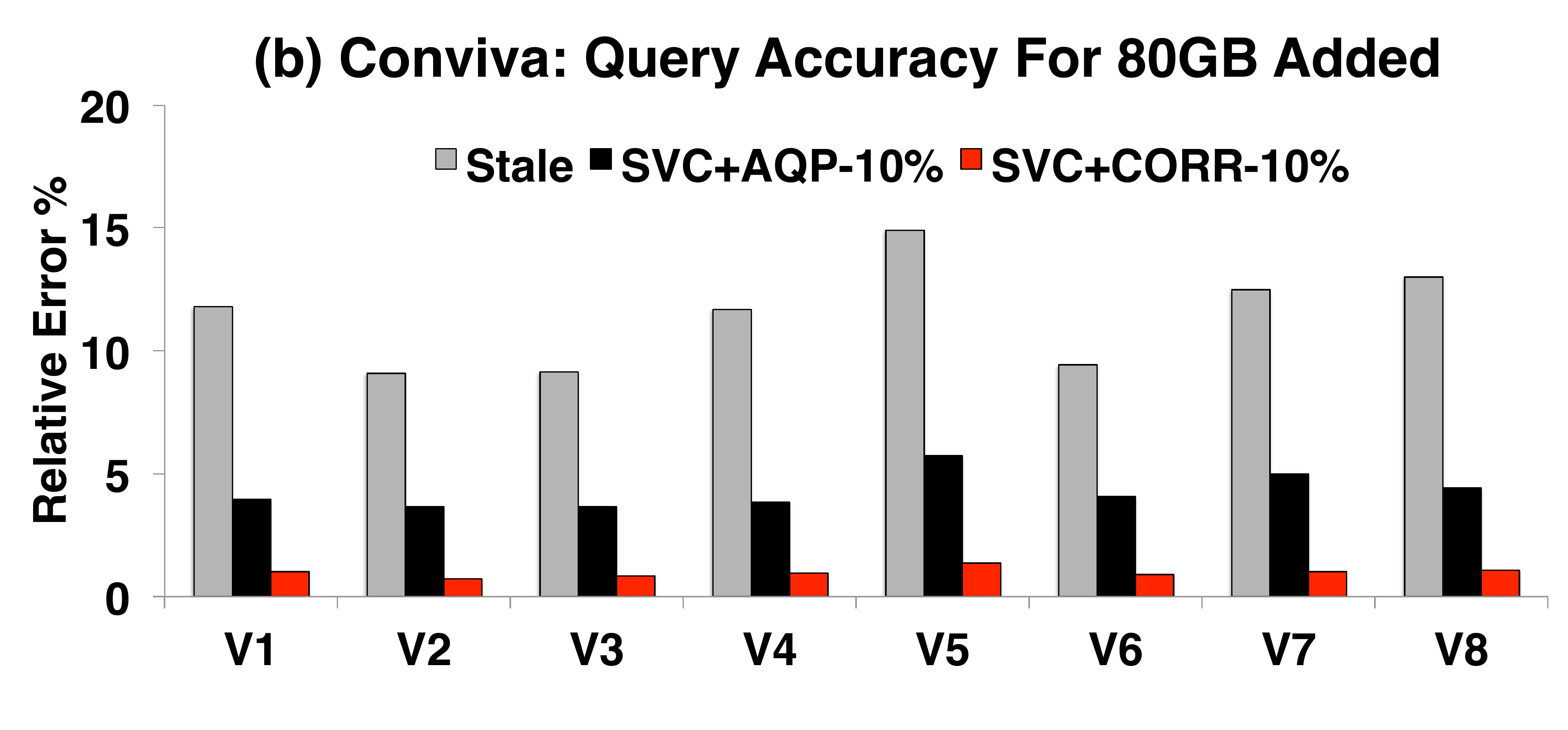} \vspace{-1.5em}
 \caption{(a) We compare the maintenance time of \svc with a 10\% sample and full incremental maintenance, and find that as with TPCD \svc saves significant maintenance time. (b) We also evaluate the accuracy of the estimation techniques. \label{conv-1}}\vspace{-1.5em}
\end{figure}
We derive the views from 800GB of base data and add 80GB of updates. 
These views are stored and maintained using Apache Spark in a distributed environment.
The goal of this experiment is to evaluate how \svc performs in a real world scenario with a real dataset and a distributed architecture.
In Figure \ref{conv-1}(a), we show that on average over all the views, \svc-10\% gives a 7.5x speedup.
For one of the views full incremental maintenance takes nearly 800 seconds, even on a 10-node cluster, which is a very significant cost.
In Figure \ref{conv-1}(b), we show that \svc also gives highly accurate results with an average error of 0.98\%.
These results show consistency with our results on the synthetic datasets.
This experiment highlights a few salient benefits of \svc: (1) sampling is a relatively cheap operation and the relative speedups in a single node and distributed environment are similar, (2) for analytic workloads like Conviva (i.e., user engagement analysis) a 10\% sample gives results with 99\% accuracy, and (3) savings are still significant in systems like Spark that do not support selective updates.

\subsection{Additional Experiments}

\subsubsection{Aggregate View}
\label{exp-datacube}

\begin{figure}[t]
\centering
 \includegraphics[scale=0.15]{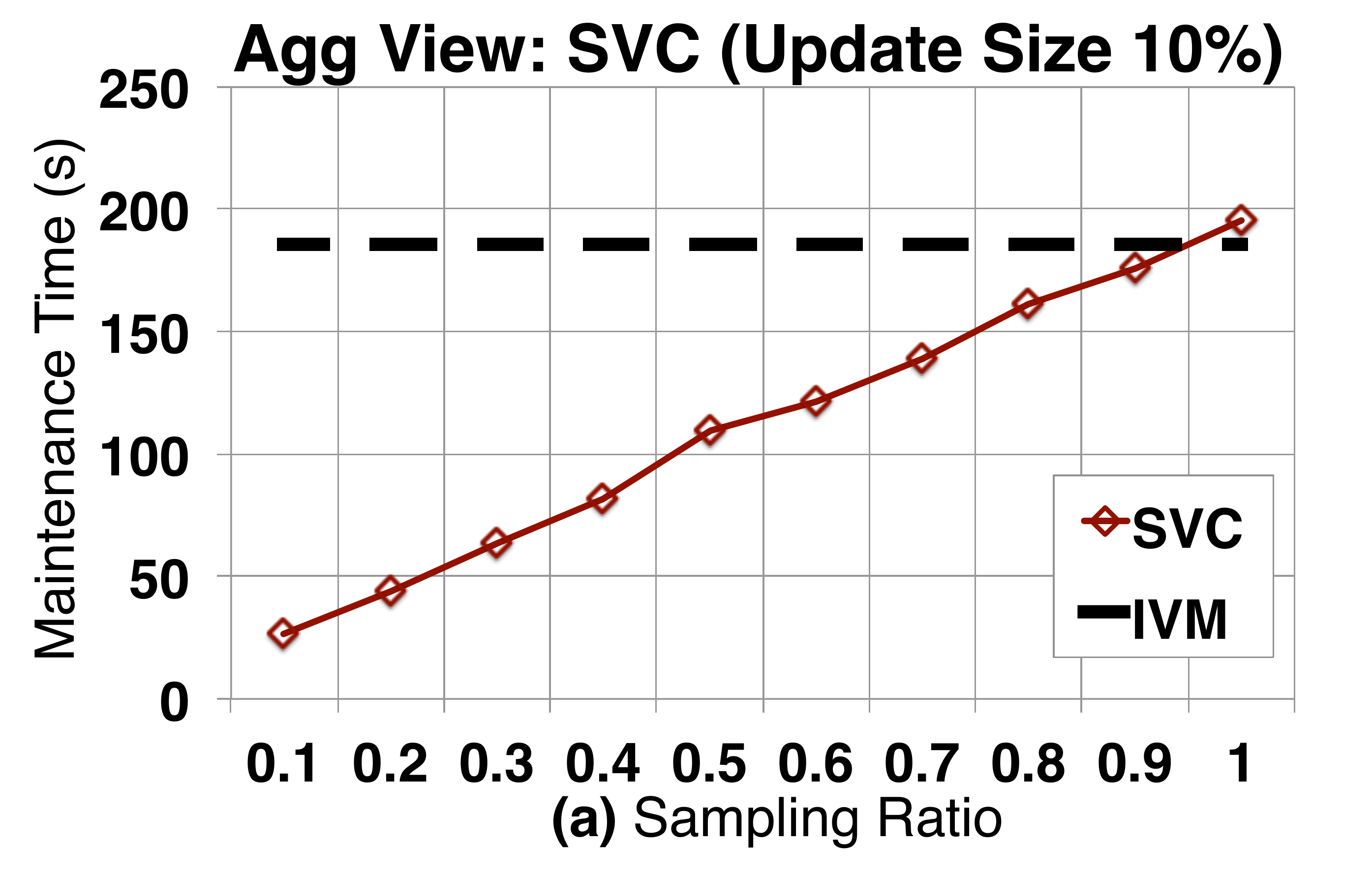}
 \includegraphics[scale=0.15]{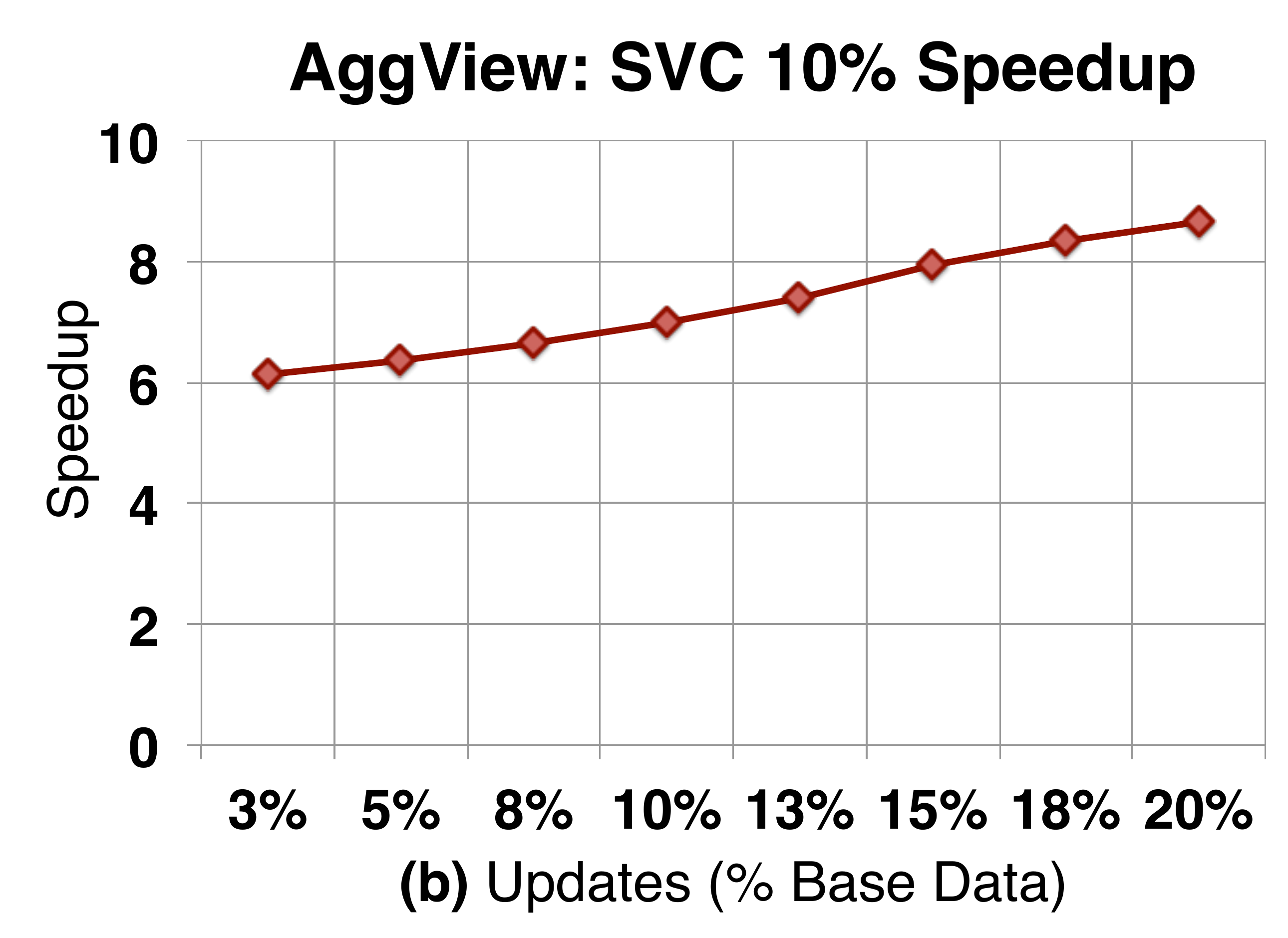}
   \caption{(a) In the aggregate view case, sampling can save significant maintenance time. (b) As the update size grows SVC tends towards an ideal speedup of 10x.\label{exp2-acc-sample}}
\end{figure}

\begin{figure}[t]
\centering
 \includegraphics[scale=0.17]{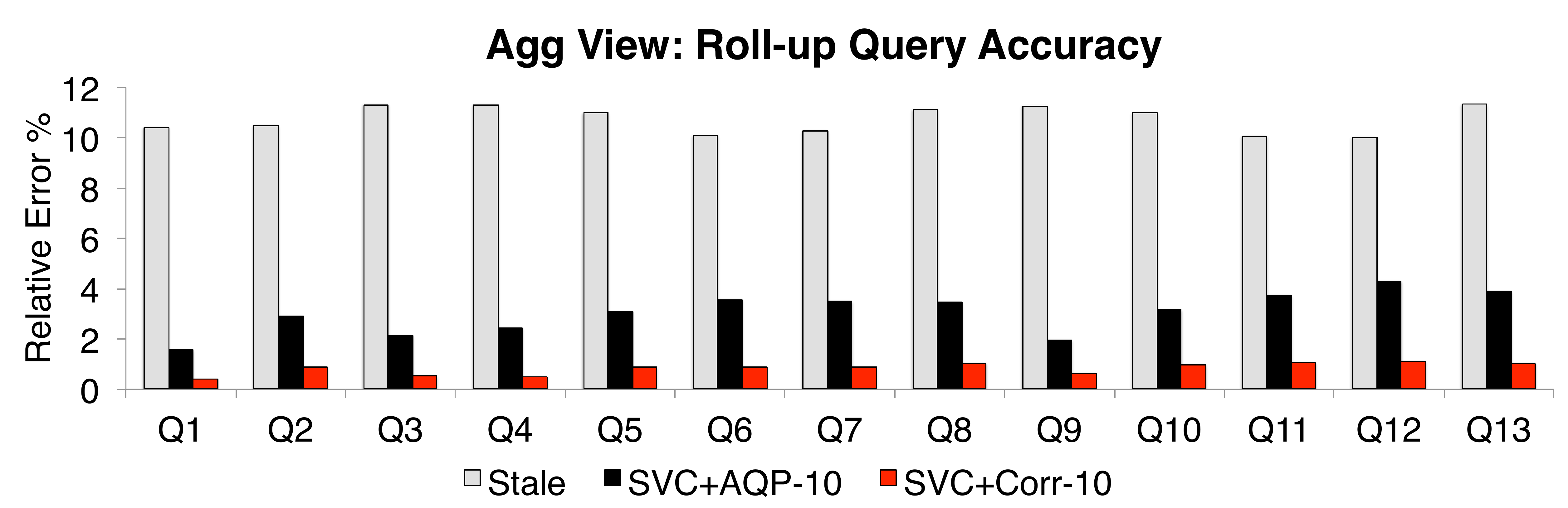}
   \caption{We measure the accuracy of each of the roll-up aggregate queries on this view. For a 10\% sample size and 10\% update size, we find that SVC+Corr is more accurate than SVC+AQP and No Maintenance.\label{exp2-acc-sample2}}
\end{figure}

\begin{figure}[t]
\centering
\includegraphics[scale=0.17]{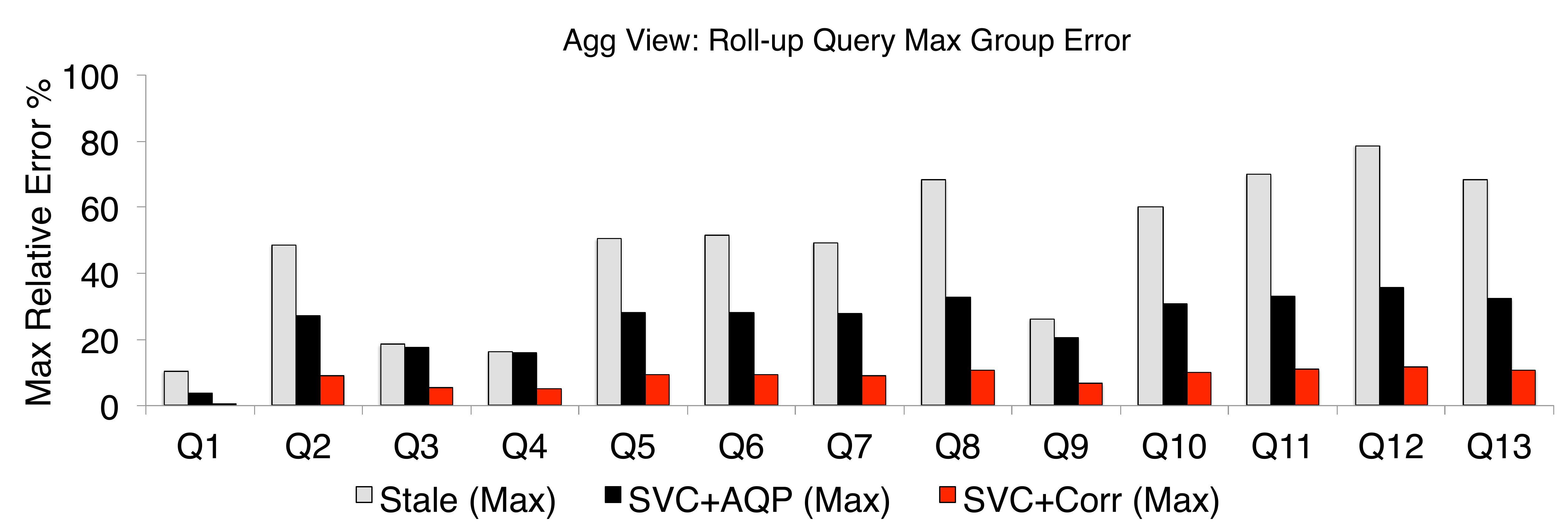}
   \caption{For 1GB of updates, we plot the max error as opposed to the median error in the previous experiments. Even though updates are 10\% of the dataset size, some queries are nearly 80\% incorrect. SVC helps significantly mitigate this error. \label{exp2-max}}
\end{figure}

\begin{figure}[t]
\centering
  \includegraphics[scale=0.17]{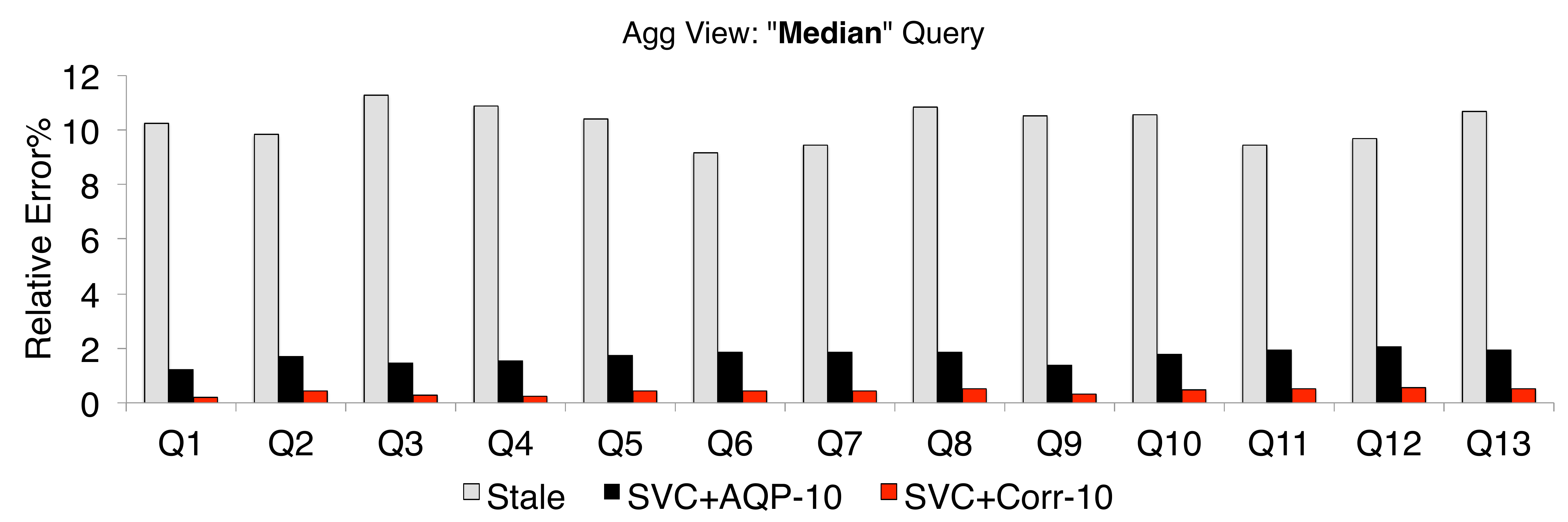}
 \caption{We run the same experiment but replace the \sumfunc query with a median query. We find that similarly SVC is more accurate.\label{exp2-median} }
\end{figure}

In our next experiment, we evaluate an aggregate view use case similar to a data cube.
We generate a 10GB base TPCD dataset with skew $z=1$, and derive the base cube as a materialized view from this dataset.
We add 1GB of updates and apply SVC to estimate the results of all of the ``roll-up" dimensions.

\textbf{Performance: }
We observed the same trade-off as the previous experiment where sampling significantly reduces the maintenance time (Figure \ref{exp2-acc-sample}(a)).
It takes 186 seconds to maintain the entire view, but a 10\% sample can be maintained in 26 seconds.
As before, we fix the sample size at 10\% and vary the update size.
We similarly observe that SVC becomes more efficient as the update size grows (Figure \ref{exp2-acc-sample}(b)), and at an update size of 20\%  the speedup is 8.7x.

\textbf{Accuracy: }
In Figure \ref{exp2-acc-sample2}, we measure the accuracy of each of the ``roll-up" aggregate queries on this view.
That is, we take each dimension and aggregate over the dimension.
We fix the sample size at 10\% and the update size at 10\%.
On average SVC+Corr is 12.9x more accurate than the stale baseline and 3.6x more accurate than SVC+AQP (Figure \ref{exp2-acc-sample}(c)). 

Since the data cubing operation is primarily constructed by group-by aggregates, we can also measure the max error for each of the aggregates.
We see that while the median staleness is close to 10\%, for some queries some of the group aggregates have nearly 80\% error (Figure \ref{exp2-max}).
SVC greatly mitigates this error to less than 12\% for all queries.

\textbf{Other Queries: }
Finally, we also use the data cube to illustrate how SVC can support a broader range of queries outside of \sumfunc, \countfunc, and \avgfunc.
We change all of the roll-up queries to use the \textbf{median} function (Figure \ref{exp2-median}).
First, both SVC+Corr and SVC+AQP are more accurate as estimating the median than they were for estimating sums. 
This is because the median is less sensitive to variance in the data.

\subsubsection{Mini-batch Experiments}
\begin{figure}[t]
\centering
 \includegraphics[scale=0.14]{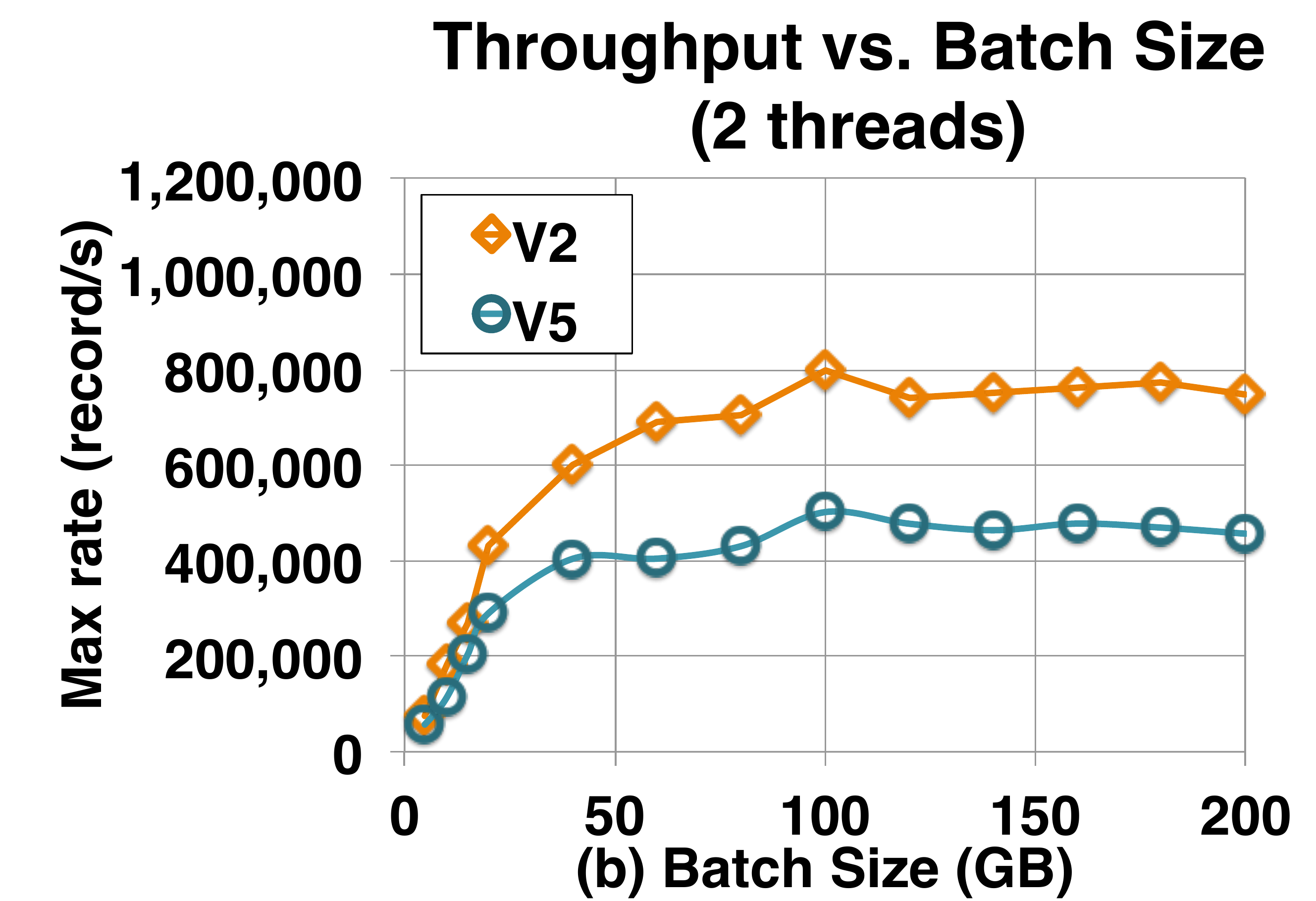}
 \includegraphics[scale=0.14]{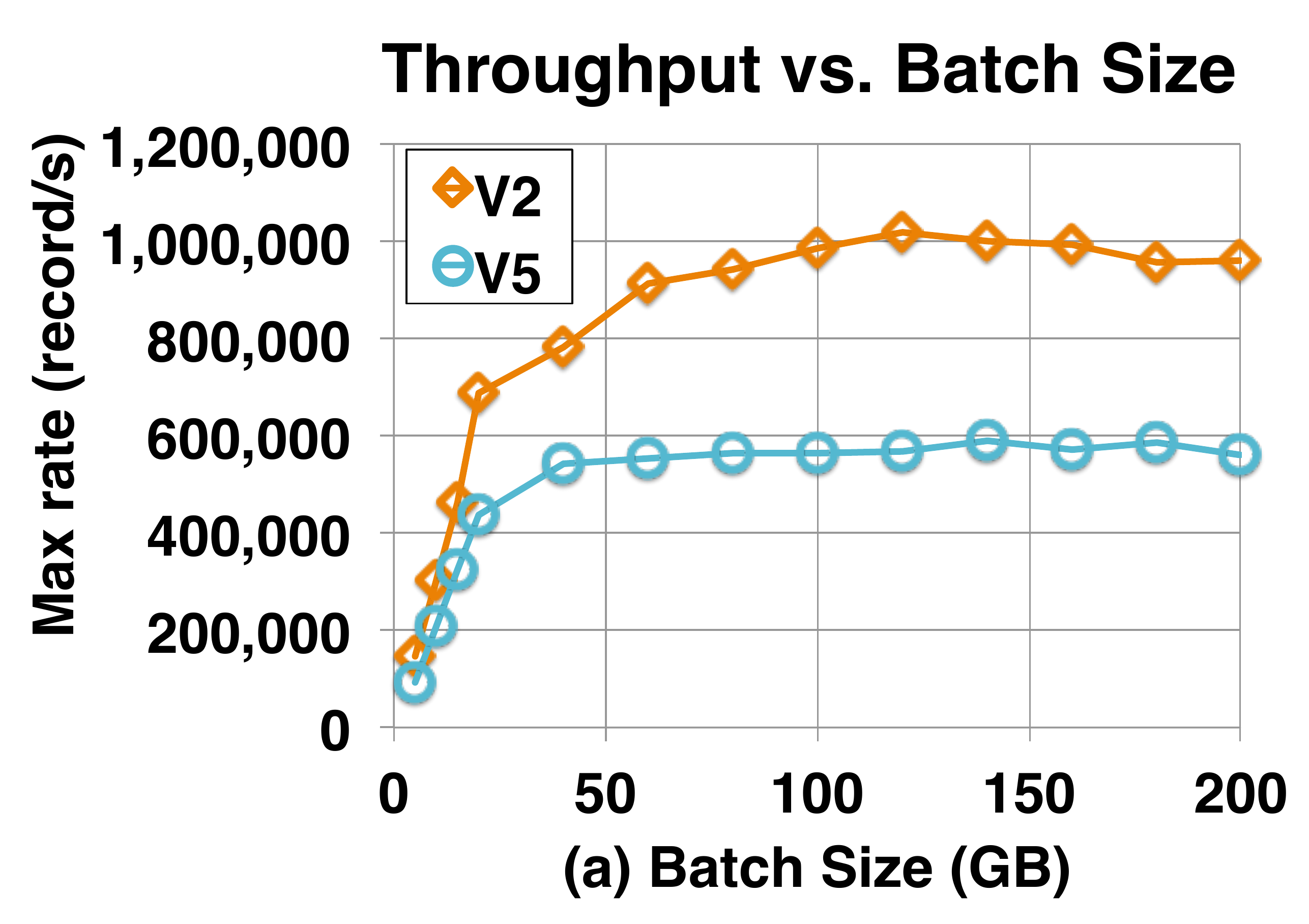}
 \caption{(a) Spark RDDs are most efficient when updated in batches. As batch sizes increase the system throughput increases. (b) When running multiple threads, the throughput reduces. However, larger batches are less affected by this reduction. \label{conv-2}}
\end{figure}

We devised an end-to-end experiment simulating a real integration with periodic maintenance.
However, unlike the MySQL case, Apache Spark does not support selective updates and insertions as the ``views" are immutable.
A further point is that the immutability of these views and Spark's fault-tolerance requires that the ``views" are maintained synchronously.
Thus, to avoid these significant overheads, we have to update these views in batches.
Spark does have a streaming variant \cite{zaharia2012discretized}, however, this does not support the complex SQL derived materialized views used in this paper, and still relies on mini-batch updates.

SVC and IVM will run in separate threads each with their own RDD materialized view.
In this application, both SVC and IVM maintain respective their RDDs with batch updates.
In this model, there are a lot of different parameters: batch size for periodic maintenance, batch size for SVC, sampling ratio for SVC, and the fact that concurrent threads may reduce overall throughput.
Our goal is to fix the throughput of the cluster, and then measure whether SVC+IVM or IVM alone leads to more accurate query answers.

\textbf{Batch sizes:} In Spark, larger batch sizes amortize overheads better.
In Figure \ref{conv-2}(a), we show a trade-off between batch size and throughput of Spark for V2 and V5.
Throughputs for small batches are nearly 10x smaller than the throughputs for the larger batches. 

\textbf{Concurrent SVC and IVM:} Next, we measure the reduction in throughput when running multiple threads.
We run SVC-10 in loop in one thread and IVM in another.
We measure the reduction in throughput for the cluster from the previous batch size experiment.
In Figure \ref{conv-2}(b), we plot the throughput against batch size when two maintenance threads are running.
While for small batch sizes the throughput of the cluster is reduced by nearly a factor of 2, for larger sizes the reduction is
smaller.
As we found in later experiments (Figure \ref{conv-5}), larger batch sizes are more amenable to parallel computation since there was more idle CPU time.

\textbf{Choosing a Batch Size:}
The results in Figure \ref{conv-2}(a) and Figure \ref{conv-2}(b) show that larger batch sizes are more efficient, however, larger batch sizes also lead to more staleness.
Combining the results in Figure \ref{conv-2}(a) and Figure \ref{conv-2}(b), for both SVC+IVM and IVM, we get cluster throughput as a function of batch size.
For a fixed throughput, we want to find the smallest batch size that achieves that throughput for both.
For V2, we fixed this at 700,000 records/sec and for V5 this was 500,000 records/sec.
For IVM alone the smallest batch size that met this throughput demand was 40GB for both V2 and V5.
And for SVC+IVM, the smallest batch size was 80GB for V2 and 100GB for V5. 
When running periodic maintenance alone view updates can be more frequent, and when run in conjunction with SVC it is less frequent. 

We run both of these approaches in a continuous loop, SVC+IVM and IVM, and measure their maximal error during a maintenance period.
There is further a trade-off with the sampling ratio, larger samples give more accurate estimates however between SVC batches they go stale.
We quantify the error in these approaches with the max error; that is the maximum error in a maintenance period (Figure \ref{conv-4}).
These competing objective lead to an optimal sampling ratio of 3\% for V2 and 6\% for V5.
At this sampling point, we find that applying SVC gives results 2.8x more accurate for V2 and 2x more accurate for V5.

\begin{figure}[t]
\centering
 \includegraphics[scale=0.14]{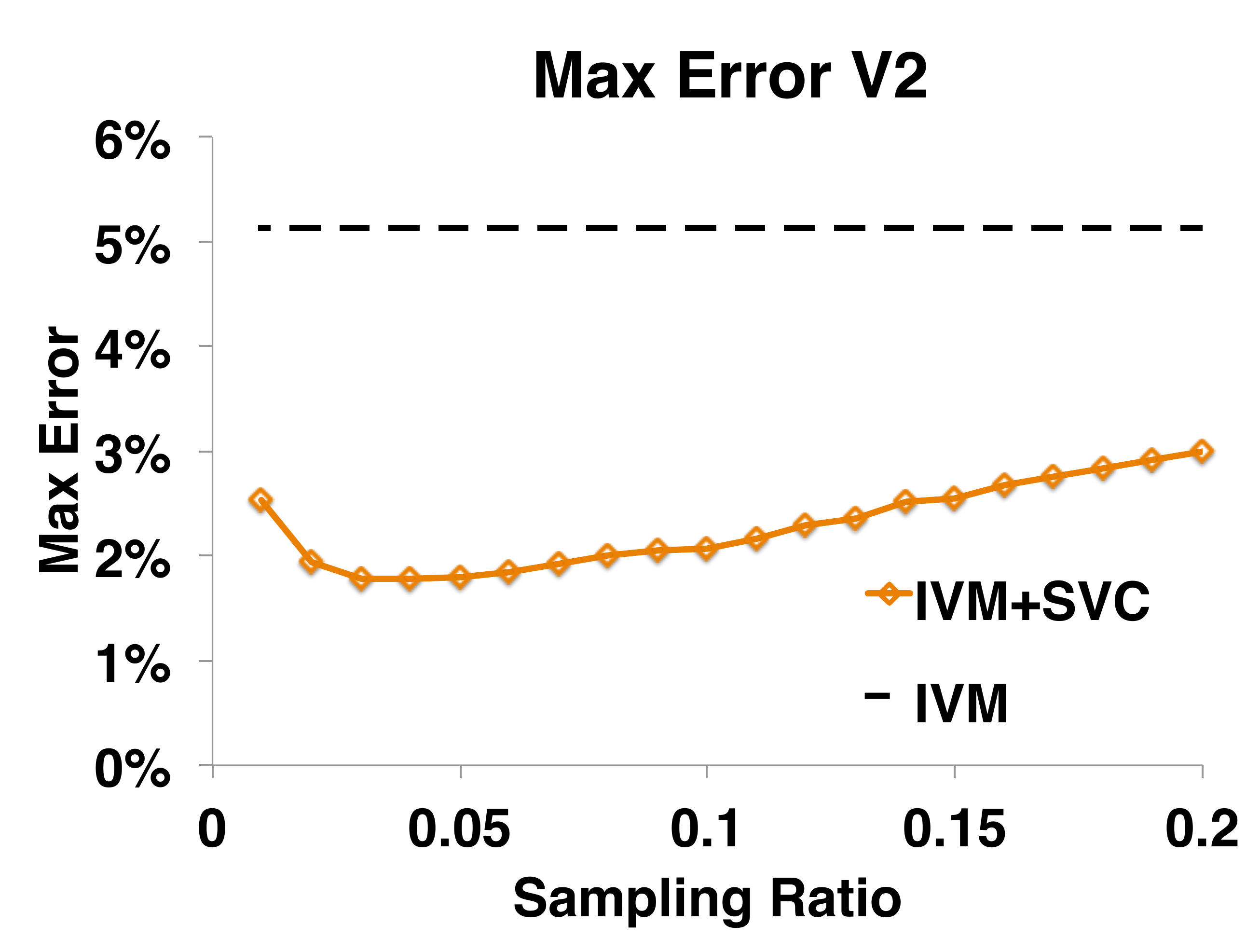}
 \includegraphics[scale=0.14]{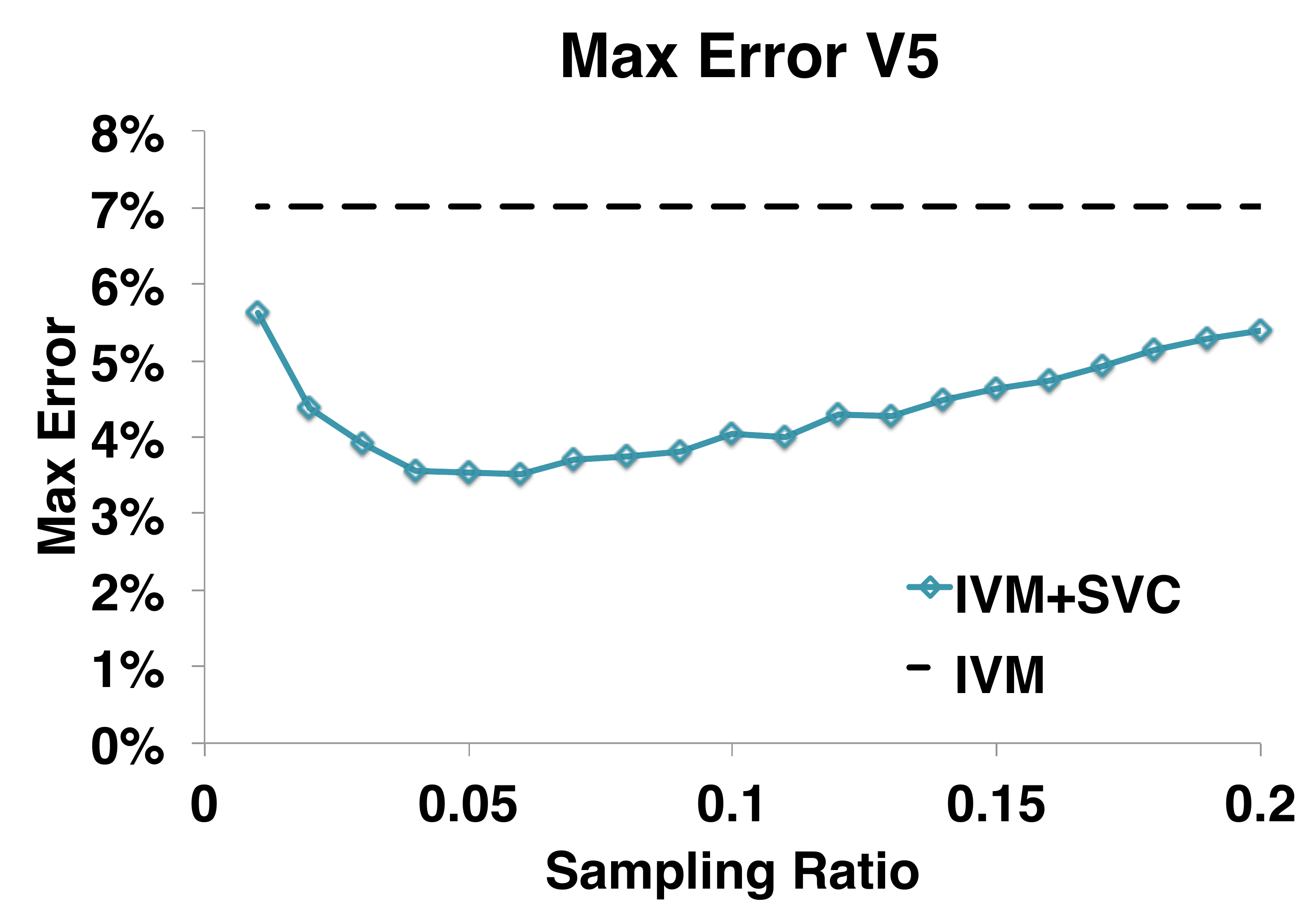}
 \caption{For a fixed throughput, SVC+Periodic Maintenance gives more accurate results for V2 and V5. \label{conv-4}} 
\end{figure}

\begin{figure}[t]
\centering
\includegraphics[width=\columnwidth]{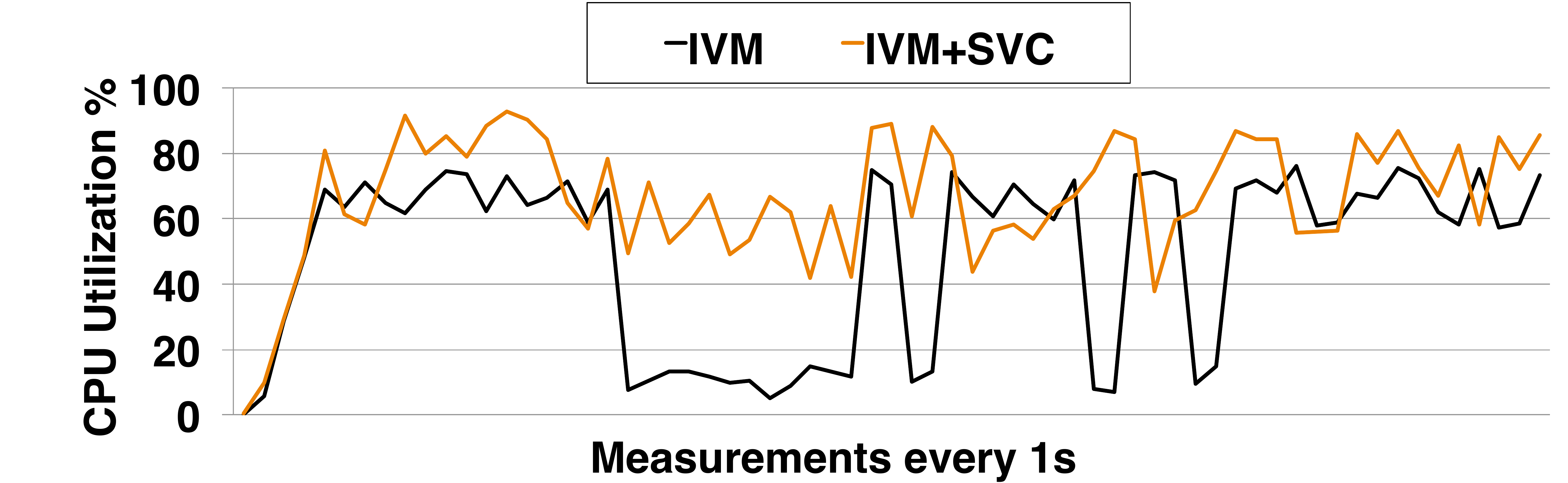}
 \caption{SVC better utilizes idle times in the cluster by maintaining the sample.\label{conv-5}} 
\end{figure}
To give some intuition on why SVC gives more accurate results, in Figure \ref{conv-5}, we plot the average CPU utilization of the cluster for both periodic IVM and SVC+periodic IVM. 
We find that SVC takes advantage of the idle times in the system; which are common during shuffle operations in a synchronous parallelism model.

In a way, these experiments present a worst-case application for SVC, yet it still gives improvements in terms of query accuracy.
In many typical deployments throughput demands are variable forcing maintenance periods to be longer, e.g., nightly.
The same way that SVC takes advantage of micro idle times during communication steps, it can provide large gains during controlled idle times when no maintenance is going on concurrently.

\vspace{-0.75em}
\section{Related Work}\label{related}
Addressing the cost of materialized view maintenance is the subject of many recent papers, which
focus on various perspectives including complex analytical queries~\cite{nikolic2014linview}, transactions~\cite{bailis2014scalable}, real-time analytics \cite{liarou2012monetdb}, and physical design~\cite{lefevre2014opportunistic}.
The increased research focus parallels a major concern in industrial systems for incrementally updating pre-computed results and indices such as Google Percolator~\cite{percolator} and Twitter's Rainbird~\cite{rainbird}.
The streaming community has also studied the view maintenance problem \cite{abadi2003aurora,golab2012scalable, he2010comet, ghanem2010supporting, KrishnamurthyFDFGLT10}. 
In Spark Streaming, Zaharia et~al. studied how they could exploit in-memory materialization~\cite{zaharia2012discretized}, and in MonetDB, Liarou et~al. studied how ideas from columnar storage can be applied to enable real-time analytics \cite{liarou2012monetdb}.
These works focus on correctness, consistency, and fault tolerance of materialized view maintenance.
\svc proposes an alternative model for view maintenance where we allow approximation error (with guarantees) for queries on materialized views for vastly reduced maintenance time.
In many decision problems, exact results are not needed as long as the probability of error is boundable. 
Sampling has been well studied in the context of query processing~\cite{AgarwalMPMMS13, olken1993random, garofalakis2001approximate}. 
Both the problems of efficiently sampling relations \cite{olken1993random} and processing complex queries \cite{agarwalknowing}, have been well studied. 
In \svc, we look at a new problem, where we efficiently sample from a maintenance strategy, a relational expression that updates a materialized view.
We generalize uniform sampling procedures to work in this new context using lineage \cite{DBLP:journals/vldb/CuiW03} and hashing.
We look the problem of approximate query processing \cite{AgarwalMPMMS13, agarwalknowing} from a different perspective by estimating a ``correction'' rather than estimating query results. 
Srinivasan and Carey studied a problem related to query correction which they called compensation-based query processing \cite{srinivasanC92} for concurrency control but did not study this for sampled estimates.
This work was applied in the context of concurrency control.
However, this work did not consider applications when the correction was applied to a sample as in \svc.
The sampling in \svc introduces new challenges such as sensitivity to outliers, questions of bias, and estimate optimality. 

Sampling has also been studied from the perspective of maintaining samples \cite{DBLP:conf/icde/OlkenR92}.
In~\cite{joshi2008materialized}, Joshi and Jermaine studied indexed materialized views that are amenable to random sampling.
While similar in spirit (queries on the view are approximate), the goal of this work was to optimize query processing and not to address the cost of incremental maintenance.
There has been work using sampled views in a limited context of cardinality estimation \cite{larson2007cardinality}, which is the special case of our framework, namely, the \countfunc query.
Nirkhiwale et al. \cite{DBLP:journals/pvldb/NirkhiwaleDJ13}, studied an algebra for estimating confidence intervals in aggregate queries.
The objective of this work is not sampling efficiency, as in \svc, but estimation.
As a special case, where we consider only views constructed from select and project operators, \svc's hash pushdown will yield the same results as their model.
There has been theoretical work on the maintenance of approximate histograms, synopses, and sketches ~\cite{gibbons1997fast, DBLP:journals/ftdb/CormodeGHJ12}, which closely resemble aggregate materialized views.
The objectives of this work (including techniques such as sketching and approximate counting) have been to reduce the required storage, not to reduce the required update time.

Meliou et al. \cite{DBLP:conf/sigmod/MeliouGNS11} proposed a technique to trace errors in an MV to base data and find responsible erroneous tuples. 
They do not, however, propose a technique to correct the errors as in \svc.
Correcting general errors as in Meliou et al. is a hard constraint satisfaction problem.
However, in \svc, through our formalization of staleness, we have a model of how updates to the base data (modeled as errors) affect MVs, which allows us to both trace errors and clean them.
Wu and Madden \cite{DBLP:journals/pvldb/0002M13} did propose a model to correct ``outliers" in an MV through deletion of records in the base data.
This is a more restricted model of data cleaning than \svc, where the authors only consider changes to existing rows in an MV (no insertion or deletion) and do not handle the same generality of relational expressions (e.g., nested aggregates).
Challamalla et al. \cite{DBLP:conf/sigmod/ChalamallaIOP14} proposed an approximate technique for specifying errors as constraints on a materialized view and proposing changes to the base data such that these constraints can be satisfied.
While complementary, one major difference between the three works \cite{DBLP:conf/sigmod/MeliouGNS11, DBLP:journals/pvldb/0002M13, DBLP:conf/sigmod/ChalamallaIOP14} and \svc is that they require an explicit specification of erroneous rows in a materialized view.
Identifying whether a row is erroneous requires materialization and thus specifying the errors is equivalent to full incremental maintenance. 
We use the formalism of a ``maintenance strategy", the relational expression that updates the view, to allow us to sample rows that are not yet materialized.
However, while not directly applicable for staleness, we see \svc as complementary to these works in the dirty data setting. 
The sampling technique proposed in Section 4 of our paper could be used to approximate the data cleaning techniques in \cite{DBLP:conf/sigmod/MeliouGNS11, DBLP:journals/pvldb/0002M13, DBLP:conf/sigmod/ChalamallaIOP14} and this is an exciting avenue of future work.

Sampling has been explored in the streaming community, and a similar idea of sampling from incoming updates has also been applied in stream processing~\cite{tatbul2003load, Garofalakis, rabkin2014aggregation}.
While some of these works studied problems similar to materialization, for example, the JetStream project (Rabkin et al.) looks at how sampling can help with real-time analysis of aggregates.
None of these works formally studied the class views that can benefit from sampling or formalized queries on these views.
However, there are ideas from Rabkin et al. that could be applied in \svc in future work, for example, their description of coarsening operations in aggregates is very similar to our experiments with the ``roll-up'' queries in aggregate views.
There are a variety of other efforts proposing storage efficient processing of aggregate queries on streams \cite{dobra2002processing, greenwald2001space} which is similar to our problem setting and motivation.


\vspace{-1em}
\section{Limitations and Opportunities}\vspace{-.3em}\label{sec:disc}
While our experiments show that \svc works for a variety of applications, there are a few limitations which we summarize in this section.
There are two primary limitations for \svc: class of queries and types of materialized views.
In this work, we primarily focused on aggregate queries and showed that accuracy decreases as the selectivity of the query increases.
Sampled-based methods are fundamentally limited in the way they can support ``point lookup" queries that select a single row.
This is predicted by our theoretical result that accuracy decreases with $\frac{1}{p}$ where $p$ is the fraction of rows that satisfy the predicate.
In terms of more view definitions, \svc does not support views with ordering or ``top-k'' clauses, as our sampling assumes no ordering on the rows of the MV and it is not clear how sampling commutes with general ordering operations.
In the future, we will explore maintenance optimizations proposed in recent work.
For example, DBToaster has two main components, higher-order delta processing and a SQL query compiler, both of which are complementary to \svc.
\svc proposes a new approach for accurate query processing with MVs.
Our results are promising and suggest many avenues for future work.
In particular, we are interested in deeper exploration of the multiple MV setting.
There are many interesting design problems such as given storage constraints and throughput demands, optimize sampling ratios over all views.
Furthermore, there is an interesting challenge about queries that join mutliple sample MVs managed by \svc.
We are also interested in the possibility of sharing computation between MVs and maintenance on views derived from other views.
Finally, our results suggest relatively a straight forward implementation of adaptive selection of the parameters in \svc such as the view sampling ratio and the outlier index threshold.

\section{Conclusion}\label{conclusion}
\vspace{-.3em}
Materialized view maintenance is often expensive, and in practice, eager view maintenance is often avoided due to its costs.
This leads to stale materialized views which have incorrect, missing, and superfluous rows.
In this work, we formalize the problem of staleness and view maintenance as a data cleaning problem.
\svc uses a sample-based data cleaning approach to get accurate query results that reflect the most recent data for a greatly reduced computational cost.
To achieve this, we significantly extended our prior work in data cleaning, SampleClean \cite{wang1999sample}, for efficient cleaning of stale MVs. 
This included processing a wider set of aggregate queries, handling missing data errors, and proving for which queries optimality of the estimates hold.
We presented both empirical and theoretical results showing that our sample data cleaning approach is significantly less expensive than full view maintenance for a large class of materialized views, while still providing accurate aggregate query answers that reflect the most recent data.

Our results are promising and suggest many avenues for future work.
In this work, we focused on aggregate queries and showed that accuracy decreases as the selectivity of the query increases.
Sampled-based methods are fundamentally limited in the way they can support ``point lookup" queries that select a single row, and we believe we can address this problem with new results in non-parametric machine learning instead of using single-parameter estimators.
In particular, we are interested in deeper exploration of the multiple MV setting.
There are also many interesting design problems such as given storage constraints and throughput demands, optimize sampling ratios over all views.

\textbf{\small We thank Kai Zeng for his advice and feedback on this paper. This research is supported in part by NSF CISE Expeditions Award CCF-1139158, LBNL Award 7076018, and DARPA XData Award FA8750-12-2-0331, and gifts from Amazon Web Services, Google, SAP, The Thomas and Stacey Siebel Foundation, Adatao, Adobe, Apple, Inc., Blue Goji, Bosch, C3Energy, Cisco, Cray, Cloudera, EMC2, Ericsson, Facebook, Guavus, HP, Huawei, Informatica, Intel, Microsoft, NetApp, Pivotal, Samsung, Schlumberger, Splunk, Virdata and VMware.}

\bibliographystyle{abbrv}
\fontsize{8.3pt}{8.5pt} \selectfont
\bibliographystyle{abbrv}
\bibliography{ref} 

\normalsize
\selectfont

\section{Appendix}

\subsection{Extensions}
\subsubsection{MIN and MAX}
\minfunc and \maxfunc fall into their own category since this is a canonical case where bootstrap fails.
We devise an estimation procedure that corrects these queries.
However, we can only achieve bound that has a slightly different interpretation than the confidence intervals seen before.
We can calculate the probability that a larger (or smaller) element exists in the unsampled view.

We devise the following correction estimate for \maxfunc: (1) For all rows in both $S$ and $S'$, calculate the row-by-row difference, (2) let $c$ be the max difference, and (3) add $c$ to the max of the stale view.

We can give weak bounds on the results using Cantelli's Inequality.
If $X$ is a random variable with mean $\mu_x$ and variance $var(X)$, then the probability that $X$ is larger than a constant $\epsilon$ 
\[
\mathbb{P}(X \ge \epsilon + \mu_x ) \le \frac{var(X)}{var(X) + \epsilon^2}
\]
Therefore, if we set $\epsilon$ to be the difference between max value estimate and the average value, we can calculate the probability that we will see a higher value. 

The same estimator can be modified for \minfunc, with a corresponding bound:
\[
\mathbb{P}(X \le \mu_x - a )) \le \frac{var(x)}{var(x) + a^2}
\]
This bound has a slightly different interpretation than the confidence intervals seen before.
This gives the probability that a larger (or smaller) element exists in the unsampled view.

\vspace{-.25em}
\subsubsection{Select Queries}
In \svc, we also explore how to extend this correction procedure to Select queries.
Suppose, we have a Select query with a predicate:
\begin{lstlisting} [mathescape]
SELECT $*$ FROM View WHERE Condition(A);
\end{lstlisting}

We first run the Select query on the stale view, and this returns a set of rows.
This result has three types of data error: rows that are missing, rows that are falsely included, and rows whose values are incorrect.

As in the \sumfunc, \countfunc, and \avgfunc query case, we can apply the query to the sample of the up-to-date view.
From this sample, using our lineage defined earlier, we can quickly identify which rows were added, updated, and deleted.
For the updated rows in the sample, we overwrite the out-of-date rows in the stale query result.
For the new rows, we take a union of the sampled selection and the updated stale selection.
For the missing rows, we remove them from the stale selection.
To quantify the approximation error, we can rewrite the Select query as \countfunc to get an estimate of number of rows that were updated, added, or deleted (thus three ``confidence'' intervals).

\subsection{Extended Proofs}

\subsection{Is Hashing Equivalent To RNG?}
In this work, we argue that hashing can be used for ``sampling" a relational expression.
However, from a complexity theory perspective, hashing is not equivalent to random number generation (RNG).
The existence of true one-way hash functions is a conjecture that would imply $P \ne NP$.
This conjecture is often taken as an assumption in Cryptography.
Of course, the ideal one-way hash functions required by the theory do not exist in practice. However, we find that existing hashes (e.g., linear hashes and SHA1) are sufficiently close to ideal that they can still take advantage of this theory. 
On the other hand, a SHA1 hash is nearly an order of magnitude slower but is much more uniform.
This assumption is called the Simple Uniform Hashing Assumption (SUHA) \cite{cormenintroduction}, and is widely used to analyze the performance of hash tables and hash partitioning.
There is an interesting tradeoff between the latency in computing a hash compared to its uniformity. For example, a linear hash stored procedure in MySQL is nearly as fast pseudorandom number generation that would be used in a TABLESAMPLE operator, however this hash exhibits some non-uniformity. 

\subsubsection{Hashing and Correspondence}
A benefit of deterministic hashing is that when applied in conjunction to the primary keys of a view, we get the Correspondence Property (Definition \ref{correspondence}) for free.
\begin{proposition}[Hashing Correspondence]
Suppose we have $S$ which is the stale view and $S'$ which is the up-to-date view.
Both these views have the same schema and a primary key $a$.
Let $\eta_{a, m}$ be our hash function that applies the hashing to the primary key $a$.
\[
\hat{S} = \eta_{a, m}(S)
\]
\[
\hat{S'} = \eta_{a, m}(S')
\]
Then, two samples $\hat{S'}$ and $\hat{S}$ correspond.
\end{proposition}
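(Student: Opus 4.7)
The plan is to verify each of the four clauses of the Correspondence Property stated earlier (Uniformity, Removal of Superfluous Rows, Sampling of Missing Rows, and Key Preservation for Updated Rows) directly from the determinism of $\eta_{a,m}$ on the primary key. The crucial observation is that because $a$ is a primary key of both $S$ and $S'$ and hashing is a pure function of $a$, a row's membership in its respective sample depends only on its key value, not on which view it comes from. I would set up notation by letting $K_S = \{s(a) : s \in S\}$ and $K_{S'} = \{s'(a) : s' \in S'\}$, and use the Simple Uniform Hashing Assumption so that, for each distinct key $k$, $\Pr[h(k) \le m] = m$.

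First, for Uniformity, I would note that within $S$ the primary keys are all distinct, so the events $\{h(s(a)) \le m\}$ act as independent Bernoulli$(m)$ trials under SUHA, giving a uniform random sample of $S$ of ratio $m$; the same argument applies verbatim to $S'$. Second, for Removal of Superfluous Rows, take any $s \in D \subseteq \widehat{S}$ with $s(a) = k$; since $s$ is superfluous, $k \notin K_{S'}$, so no element of $S'$ (and hence of $\widehat{S}'$) carries key $k$, forcing $D \cap \widehat{S}' = \emptyset$. Third, for Sampling of Missing Rows, if $I$ is the set of missing rows, each $s' \in I$ enters $\widehat{S}'$ exactly when $h(s'(a)) \le m$, which occurs with probability $m$, so by linearity of expectation $\mathbb{E}[|I \cap \widehat{S}'|] = m|I|$.

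The key step, and the one where the deterministic-hashing idea is really used, is Key Preservation for Updated Rows. Take $s \in \widehat{S}$ that is neither in $D$ nor corresponds to a missing key; then by definition there exists $s' \in S'$ with $s'(a) = s(a)$. Since $s \in \widehat{S}$ we have $h(s(a)) \le m$, and because $\eta_{a,m}$ depends only on the key value, $h(s'(a)) = h(s(a)) \le m$, so $s' \in \widehat{S}'$. This is what makes the two samples correlated rather than independent and is precisely the property exploited by \svc+CORR.

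The main obstacle is the Uniformity clause: strictly speaking, hash functions are deterministic, so ``probability $m$'' must be interpreted either over the randomness of the hash (as under SUHA, discussed just above the proposition in the appendix) or over the distribution of primary-key values. I would explicitly invoke SUHA to legitimize the Bernoulli model, matching the assumption the paper already adopts, and note that the other three clauses are purely combinatorial consequences of determinism on $a$ and require no probabilistic assumption at all.
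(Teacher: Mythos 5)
Your proof is correct and follows essentially the same route as the paper's: verify the four clauses of the Correspondence Property one by one, invoking SUHA for uniformity and the determinism of the hash on the shared primary key for the remaining clauses. If anything, your treatment of the Sampling of Missing Rows clause (via per-key inclusion probability and linearity of expectation) is more explicit than the paper's terse remark that it is ``just the converse'' of the superfluous-rows condition.
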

\begin{proof}
There are four conditions for correspondence:
\begin{itemize}
\item (1) Uniformity: $\widehat{S'}$ and $\widehat{S}$ are uniform random samples of $S'$ and $S$ respectively with a sampling ratio of $m$
\item (2) Removal of Superfluous Rows: $D = \{\forall s \in \widehat{S} \nexists s' \in S': s(u) = s'(u)\}$, $D \cap \widehat{S'} = \emptyset$ 
\item (3) Sampling of Missing Rows: $I = \{\forall s' \in \widehat{S'} \nexists s \in S: s(u) = s'(u)\}$, $\mathbb{E}(\mid I \cap \widehat{S'} \mid) = m\mid I \mid $ 
\item (4) Key Preservation for Updated Rows: For all $s\in \widehat{S}$ and not in $D$ or $I$, $s' \in \widehat{S}': s'(u) = s(u)$.
\end{itemize}
Uniformity is satisfied under by definition under SUHA (Simple Uniform Hashing Assumption).
Condition 2 is satisfied since if $r$ is deleted, then $r \not \in S'$ which implies that $r \not\in \hat{S'}$.
Condition 3 is just the converse of 2 so it is satisfied.
Condition 4 is satisfied since if $r$ is in $\hat{S}$ then it was sampled, and then since the primary key is consistent between $S$ and $S'$ it will also be sampled in $\hat{S'}$.
\end{proof}

\subsection{Theorem 1 Proof}
\begin{theorem}
Given a derived relation $R$, primary key $a$, and the sample $\eta_{a, m}(R)$.
Let $S$ be the sample created by applying $\eta_{a, m}$ without push down and 
$S'$ be the sample created by applying the push down rules to $\eta_{a, m}(R)$.
$S$ and $S'$ are identical samples with sampling ratio $m$.
\end{theorem}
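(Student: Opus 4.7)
The plan is to prove the theorem by structural induction on the expression tree of $R$, exploiting the fact that the hash function $h$ underlying $\eta_{a,m}$ is deterministic: a row with primary-key value $k$ is in $\eta_{a,m}(R)$ iff $h(k) \le m$. Thus it suffices to show that the set of primary-key values appearing at the root of the tree is the same in $S$ and $S'$, because the sampling ratio $m$ is then immediately inherited from the uniformity of $h$ (SUHA).

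For the base case, when the expression tree is a single leaf $R$, no push-down occurs and $S = S' = \eta_{a,m}(R)$ trivially. For the inductive step, I would assume that push-down commutes with hashing for every proper sub-expression and check each operator case using the primary-key rules of Definition \ref{pk}. For $\sigma_\phi$, $\Pi$ (when $a$ is in the projection), $\cup$, $\cap$, and $-$, commutativity is direct: the predicate or set operation is independent of $h$, and the rules for primary-key generation guarantee that the hashed attribute is preserved in the operand. For $\gamma_{f,A}$ with $a \in A$, each output row is uniquely identified by its group-by key; hashing the operand removes exactly those rows whose group key fails $h(\cdot) \le m$, which is the same set of groups removed by hashing the aggregate output.

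The main obstacle is the join case, because in general the primary key of $R_1 \bowtie_\phi R_2$ is the pair of operand keys, and hashing a pair is not equivalent to independently hashing its components. Push-down is therefore restricted to the two special cases identified in the rules. For the foreign-key join, the fact table's primary key already uniquely identifies each joined row (each fact tuple has a unique matching dimension tuple), so $\eta_{a,m}$ on the output and $\eta_{a,m}$ on the fact table select the same rows. For the equality join with condition $R_1.a = R_2.b$, I would argue that a joined tuple $(r_1, r_2)$ satisfies $h(r_1.a) \le m$ and $h(r_2.b) \le m$ iff $h(r_1.a) \le m$ alone, since the equality constraint forces $h(r_1.a) = h(r_2.b)$; this matches exactly the rows that would survive $\eta_{a,m}$ applied to the join output.

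Combining the base case with these operator-by-operator verifications gives the inductive conclusion, namely that $S$ and $S'$ contain the same set of primary-key values at the root and hence the same set of rows. Uniformity of $h$ then yields that the sampling ratio is $m$ in expectation for both, completing the proof.
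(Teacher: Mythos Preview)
Your proposal is correct and follows essentially the same structural-induction argument as the paper: a trivial base case, then an operator-by-operator verification that $\eta_{a,m}$ commutes with $\sigma$, $\Pi$ (when $a$ is projected), $\cup$, $\cap$, $-$, and $\gamma$ (when $a$ is in the group-by key), with the join case handled only under the foreign-key or equality-join constraints that force the hashed attribute to coincide across operands. Your treatment is in fact slightly more explicit than the paper's sketch, particularly in spelling out why the equality constraint $R_1.a = R_2.b$ makes $h(r_1.a) \le m \Leftrightarrow h(r_2.b) \le m$ and in invoking SUHA for the sampling-ratio claim.
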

\begin{proof}
We can prove this by induction.
The base case is where the expression tree is only one node, trivially making this true.
Then, we can induct considering one level of operators in the tree.
$\sigma, \cup, \cap, -$ clearly commute with hashing the key $a$ allowing for push down.
$\Pi$ commutes only if $a$ is in the projection.
For $\bowtie$, a sampling operator on $Q$ can be pushed down if $a$ is in either $k_r$ or $k_s$, or if there is a constraint that links $k_r$ to $k_s$.
There are two cases in which this happens a foreign-key relationship or an equality join on the same key.
For group by aggregates, if $a$ is in the group clause (i.e., it is in the aggregate) then a hash of the operand filters all rows that have $a$ which is sufficient to materialize the derived row.
It is provably NP-Hard to pushdown through a nested group by aggregate such as:
\begin{lstlisting}
SELECT c, count(1)
FROM ( 
       SELECT videoId, sum(1) as c FROM Log 
       GROUP BY videoId
     )
GROUP BY c
\end{lstlisting}
by reduction to a SUBSET-SUM problem.
\end{proof}

\subsection{More about the Hash Operator}
We defined a concept of tuple-lineage with primary keys.
However, a curious property of the deterministic hashing technique is that we can actually hash any attribute while retain the important statistical properties.
This is because a uniformly random sample of any attribute (possibly not unique) still includes every individual row with the same probability.  
A consequence of this is that we can push down the hashing operator through arbitrary equality joins (not just many-to-one) by hashing the join key.

We defer further exploration of this property to future work as it introduces new tradeoffs.
For example, sampling on a non-unique key, while unbiased in expectation, has higher variance in the size of the sample.
Happening to hash a large group may lead to decreased performance. 

Suppose our keys are duplicated $\mu_k$ times on average with variance $\sigma_k^2$, then the variance of the
sample size is for sampling fraction $m$:
\[m(1-m)\mu_k^2+(1-m)\sigma_k^2\]
This equation is derived from the formula for the variance of a mixture distribution.
In this setting, our sampling would have to consider this variance against the benefits of pushing the hash operator further down the query tree. 

\subsection{Experimental Details}

\subsubsection{Join View TPCD Queries}
In our first experiment, we materialize the join of lineitem and orders.
We treat the TPCD queries as queries on the view, and we selected 12 out of the 22 to include in our experiments.
The other 10 queries did not make use of the join.

\subsubsection{Conviva Views}
In this workload, there were annotated summary statistics queries, and we filtered for the most common types.
While, we cannot give the details of the queries, we can present some of the high-level characteristics of 8 summary-statistics type views. 
\begin{itemize} 
\item \textbf{V1.} Counts of various error types grouped by resources, users, date
\item \textbf{V2.} Sum of bytes transferred grouped by resource, users, date
\item \textbf{V3.} Counts of visits grouped by an expression of resource tags, users, date.
\item \textbf{V4.} Nested query that groups users from similar regions/service providers together then aggregates statistics
\item \textbf{V5.} Nested query that groups users from similar regions/service providers together then aggregates error types
\item \textbf{V6.} Union query that is filtered on a subset of resources and aggregates visits and bytes transferred
\item \textbf{V7.} Aggregate network statistics group by resources, users, date with many aggregates.
\item \textbf{V8.} Aggregate visit statistics group by resources, users, date with many aggregates.
\end{itemize}

\subsubsection{Data Cube Specification}
We defined the base cube as a materialized view:
\begin{lstlisting}
select
  sum(l_extendedprice * (1 - l_discount)) as revenue,
  c_custkey, n_nationkey,
  r_regionkey, L_PARTKEY
from
  lineitem, orders,
  customer, nation,
  region
where
  l_orderkey = o_orderkey and
  O_CUSTKEY = c_custkey and
  c_nationkey = n_nationkey and
  N_REGIONKEY = r_regionkey

group by
  c_custkey, n_nationkey, 
  r_regionkey, L_PARTKEY
\end{lstlisting}

Each of queries was an aggregate over subsets of the dimensions of the cube, 
with a \sumfunc over the revenue column.
\begin{itemize}
\item Q1. all
\item Q2. c\_custkey
\item Q3. n\_nationkey
\item Q4. r\_regionkey
\item Q5. l\_partkey
\item Q6. c\_custkey,n\_nationkey
\item Q7. c\_custkey,r\_regionkey
\item Q8. c\_custkey,l\_partkey
\item Q9. n\_nationkey, r\_regionkey
\item Q10. n\_nationkey, l\_partkey
\item Q11. c\_custkey,n\_nationkey, r\_regionkey
\item Q12. c\_custkey,n\_nationkey,l\_partkey
\item Q13. n\_nationkey,r\_regionkey,l\_partkey
\end{itemize}

When we experimented with the median query, we changed the \sumfunc to a median of the revenues.

\subsubsection{Table Of TPCD Queries 2}
We denormalize the TPCD schema and treat each of the 22 queries as views on the denormalized schema.
In our experiments, we evaluate 10 of these with SVC. 
Here, we provide a table of the queries and reasons why a query was not suitable for our experiments.
The main reason a query was not used was because the cardinality of the result was small.
Since we sample from the view, if the result was small eg. < 10, it would not make sense to apply SVC.
Furthermore, in the TPCD specification the only tables that are affected by updates are lineitem and orders; and queries that
do not depend on these tables do not change; thus there is no need for maintenance.

Listed below are excluded queries and reasons for their exclusion.
\begin{itemize}
\item Query 1. Result cardinality too small
\item Query 2. The query was static
\item Query 6. Result cardinality too small
\item Query 7. Result cardinality too small
\item Query 8. Result cardinality too small
\item Query 11. The query was static 
\item Query 12. Result cardinality too small
\item Query 14. Result cardinality too small
\item Query 15. The query contains an inner query, which we treat as a view.
\item Query 16. The query was static 
\item Query 17. Result cardinality too small
\item Query 19. Result cardinality too small
\item Query 20. Result cardinality too small
\end{itemize}

\end{document}